\newcommand{\EE}{\mathbb{E}}
\newcommand{\PP}{\mathbb{P}}
\newtheorem{Corollary}[@theorem]{Corollary}
\newtheorem{Lem}[@theorem]{Lemma}
\newcommand{\qed}{\hfill \ensuremath{\Box}}
\begin{document}

\title{ 
\Large Prophet Secretary Through Blind Strategies 
\thanks{Partially supported by CONICYT-Chile through grant PII 20150140, By
ECOS-CONICYT thorough grant C15E03, and by a Google Research for Latin
America Award.
\newline
A preliminary version of this paper appeared in the Proceedings of the 30th ACM-SIAM Symposium on Discrete Algorithms (SODA 2019).} }
\author{ 
Jose Correa \thanks{Universidad de Chile, Santiago, Chile.}\\
\and
Raimundo Saona $^\dag$ \\ 
\and Bruno Ziliotto \thanks{CEREMADE, CNRS, Universit\'e Paris Dauphine, PSL University, Paris, France.}
}

\date{}

\maketitle




\begin{abstract} \small\baselineskip=9pt 
In the classic prophet inequality, a well-known problem in optimal stopping theory, samples from independent random variables (possibly differently distributed) arrive online. A gambler that knows the distributions, but cannot see the future, must decide at each point in time whether to stop and pick the current sample or to continue and lose that sample forever. The goal of the gambler is to maximize the expected value of what she picks and the performance measure is the worst case ratio between the expected value the gambler gets and what a prophet, that sees all the realizations in advance gets. In the late seventies, Krengel and Sucheston, and Garling \cite{KS77}, established that this worst case ratio is $1/2$.
A particularly interesting variant is the so-called prophet secretary problem, in which the only difference is that the samples arrive in a uniformly random order. For this variant several algorithms are known to achieve a constant of $1-1/e$ and very recently this barrier was slightly improved by  Azar et al. \cite{ACK18}. 

In this paper we introduce a new type of multi-threshold strategy, called \textit{blind strategy}. Such a strategy sets a nonincreasing sequence of thresholds that depends only on the distribution of the maximum of the random variables, and the gambler stops the first time a sample surpasses the threshold of the stage. 
Our main result shows that these strategies can achieve a constant of $0.669$ in the prophet secretary problem, improving upon the best known result of Azar et al. \cite{ACK18}, and even that of Beyhaghi et al. \cite{BGP18} that works in the case the gambler can select the order of the samples. The crux of the result is a very precise analysis of the underlying stopping time distribution for the gambler's strategy that is inspired by the theory of Schur convex functions. We further prove that our family of blind strategies cannot lead to a constant better than $0.675$. 

Finally we prove that no algorithm for the gambler can achieve a constant better than $\sqrt{3}-1$, which also improves upon a recent result of Azar et al. \cite{ACK18}. This implies that the upper bound on what the gambler can get in the prophet secretary problem is strictly lower than what she can get in the i.i.d. case. This constitutes the first separation between prophet secretary problem and the i.i.d. prophet inequality.

\end{abstract}

\section{Introduction}

{\bf Prophet-Inequalities.} For fixed $n>1$, let $V_1,\ldots, V_n$ be non-negative, independent random variables and $T_n$ the set of stopping times associated with the filtration generated by $V_1,\ldots, V_n$. A classic result of Krengel and Sucheston, and Garling \cite{KS77,KS78} asserts that 
$\EE(\max\{V_1,\ldots,V_n\})\le 2 \sup\{\EE (V_T):T\in T_n\}, $ and that 2 is the best possible bound. 
The interpretation of this result says that a {\em gambler}, who only knows the distribution of the random variables and that looks at them sequentially, can select a stopping rule that guarantees her half of the value that a {\em prophet}, who knows all the realizations could get.
The study of this type of inequalities, known as {\em prophet inequalities}, was initiated by Gilbert and Mosteller \cite{GM66} and attracted a lot of attention in the eighties \cite{HK82,K86,KW12,SC83}. In particular Samuel-Cahn \cite{SC83} noted that rather than looking at the set of all stopping rules one can (quite naturally) only look at threshold stopping rules in which the decision to stop depends on whether the value of the currently observed random variable is above a certain threshold (and possibly on the rest of the history). In the last decade the theory of prophet inequalities has resurged as an important problem due to its connections to \textit{posted price mechanisms} (PPMs) which are frequently used in online sales (see  \cite{CHM10} and \cite{EINAV}). The way these mechanisms work is as follows. Suppose a seller has an item to sell. Consumers arrive one at a time and the seller proposes to each consumer a take-it-or-leave-it offer. The first customer accepting the offer pays that price and takes the item. This is again a stopping problem, and we refer the reader to \cite{CHM10,CFPV18,HKS2007,M81} for the connection between this stopping problem and prophet inequalities.

Although the situation for the standard prophet inequality just described is well understood, there are variants of the problem, which are particularly relevant given the connection to PPMs, for which the situation is very different. In what follows we describe three important variants that are connected to each other and constitute the main focus of this paper.

\begin{itemize}

	\item {\em Order selection.} In this version the gambler is allowed to select the order in which she examines the random variables. For this version \cite{CHM10} improved the bound of $1/2$ (of the standard prophet inequality) to $1-1/e \approx 0.6321$. This bound remained the best known for quite some time until Azar et al. \cite{ACK18} improved it to $1-1/e+1/400\approx 0.6346$. Interestingly, the bound of Azar et al. actually applies to the random order case described below. Very recently Beyhaghi et al. \cite{BGP18}, use order selection to further improve the bound to $1-1/e+0.022\approx 0.6541$.
	
	\item {\em Prophet secretary (or random order).} In this version the random variables are shown to the gambler in random order, as in the secretary problem. This version was first studied by Esfandiari et al. \cite{EHL15} who found a bound of $1-1/e$. Their algorithm defines a nonincreasing sequence of $n$ thresholds $\tau_1,\ldots,\tau_n$ that only depend on the expectation of the maximum of the $V_i's$ and on $n$. The gambler at time-step $i$ stops if the value of $V_{\sigma_i}$ (the variable shown at step $i$) surpasses $\tau_i$. Later, Correa et al. \cite{CFH17} proved that the same factor of $1-1/e$ can be obtained with a personalized but nonadaptive sequence of thresholds, that is thresholds $\tau_1,\ldots,\tau_n$ such that whenever variable $V_i$ is shown the gambler stops if its value is above $\tau_i$. In recent work, Ehsani et al. \cite{EHK18} show that the bound of $1-1/e$ can be achieved using a single threshold (having to randomize to break ties in some situations). This result appears to be surprising since without the ability of breaking ties at random $1/2$ is the best possible constant and this insight turns out to be the starting point of our work. Shortly after the work of Ehsani et al., Azar et al. \cite{ACK18} improved it to $1-1/e+1/400\approx 0.6346$ through an algorithm that relies on some subtle case analysis. 

	\item {\em IID Prophet inequality.} Finally, we mention the case when the random variables are identically distributed. Here, the constant $1/2$ can also be improved. Hill and Kertz \cite{HK82} provided a family of ``bad'' instances from which  Kertz \cite{K86} proved the largest possible bound one could expect is $1/\beta\approx 0.7451$, where $\beta$ is the unique solution to $\int_0^1\frac{1}{y(1-\ln(y))+(\beta-1)}dy = 1$. Quite surprisingly, this upper bound is still the best known upper bound for the two variants above. Regarding algorithms Hill and Kertz also proved a bound of $1-1/e$ which was  improved by Abolhassan et al. \cite{AE17} to $0.7380$. Finally Correa et al. \cite{CFH17} proved that $1/\beta=0.7451$ is a tight value. To this end they exhibit a quantile strategy for the gambler in which some quantiles $q_1<\ldots < q_n$, that only depend on $n$ (and not on the distribution), are precomputed and then translated into thresholds so that if the gambler gets to step $i$, she will stop with probability $q_i$.
	
\end{itemize}


\subsection{Our contribution}

In this paper we study the prophet secretary problem and propose improved algorithms for it. In particular our work improves upon the recent work of Ehsani et al. \cite{EHK18}, Azar et al. \cite{ACK18}, and Beyhaghi et al. \cite{BGP18} by providing an algorithm that guarantees the gambler a fraction of $0.669 \approx 1-1/e+1/27$ in the prophet secretary setting. Our main contribution however is not the actual numerical improvement but rather the way in which this is obtained. In addition, we provide an example that shows that no algorithm can achieve a factor better than $\sqrt{3}-1 \approx 0.732$ for the prophet secretary setting.

From a conceptual viewpoint we introduce a class of algorithms which we call {\em blind} strategies, that are very robust in nature. This type of algorithm is a clever generalization of the single threshold algorithm of Ehsani et al. to a multi-threshold setting. In their algorithm Ehsani et al. first compute a threshold $\tau$ such that $\PP(\max\{V_1,\ldots,V_n\}\le \tau)=1/e$ and then use this $\tau$ as a single threshold strategy, so that the gambler stops the first time in which the observed value surpasses $\tau$. They observe that this strategy only works for random variables with continuous distributions, however they also note that by allowing randomization the strategy can be extended to general random variables. Rather than fixing a single probability of acceptance we fix a function $\alpha:[0,1]\to [0,1]$ which is used to define a sequence of thresholds in the following way. Given an instance with $n$ continuous distributions we draw uniformly and independently $n$ random values in $[0,1]$, and reorder them as $u_{[1]}<\cdots<u_{[n]}$. Then we set thresholds $\tau_1,\ldots,\tau_n$ such that $\PP(\max\{V_1,\ldots,V_n\}\le \tau_i)=\alpha(u_{[i]})$ and the gambler stops at time $i$ if $V_{\sigma_i}>\tau_i$. Note that if the function $\alpha$ is nonincreasing this will lead to a nonincreasing sequence of thresholds. 

The idea of blind strategies comes from the i.i.d. case mentioned above. In that setting the strategies are indeed best possible as shown by Correa et al. \cite{CFH17}. What makes blind strategies attractive is that although decisions are time dependent, this dependence lies completely in the choice of the function $\alpha$, which is independent of the instance. This independence significantly simplifies the analysis of multi-threshold strategies. Again, when facing discontinuous distributions we also require randomization for our results to hold. 

From a technical standpoint our analysis introduces the use of Schur convexity \cite{PPT92} in the prophet inequality setting. We start our analysis by revisiting the single threshold strategy of Ehsani et al., which corresponds to a constant blind strategy $\alpha(\cdot)=1/e$. We exhibit a new analysis of this strategy showing stochastic dominance type result. Indeed we prove that the probability that the gambler gets a value of more than $t$ is at least that of the maximum being more than $t$, rescaled by a factor $1-1/e$. This result uses Schur convexity to deduce that if there is a value above the threshold $\tau$, then it is chosen by the gambler with probability at least $1-1/e$. Then we extend this analysis to deal with more general functions $\alpha$ which require precise bounds on the distribution of the stopping time corresponding to a function $\alpha$. These bounds make use of results of Esfandiari et al. \cite{EHL15} and Azar et al. \cite{ACK18} and of newly derived bounds that follow from the core ideas in Schur convexity theory. 
	
Again in this more general setting we find an appropriate stochastic dominance type bound on the probability that the gambler obtains at least a certain amount with respect to the probability that the prophet obtains the same amount. Interestingly we manage to make the bound solely dependent on the blind strategy by basically controlling the implied stopping time distribution (patience of the gambler). Then optimizing over blind strategies leads to the improved bound of $0.669$. Through the paper we show two lower bounds on the performance of a blind strategy, the second more involved than the first one. In the first case, there is a natural way to optimize over the choice of $\alpha$ solving an ordinary differential equation, leading to a guarantee of $0.665$. In the second case, using a refined analysis, we derive the stated bound of $0.669$. Although it may seem that our general approach still leaves some room for improvement, we prove that  blind strategies cannot lead to a factor better than 0.675. This bound is obtained by taking two carefully chosen instances and proving that no blind strategy can perform well in both. 

Finally, we prove an upper bound on the performance of any algorithm: we construct an instance (which is not i.i.d.) in which no algorithm can perform better than $\sqrt{3}-1\approx 0.732$. This improves upon the best possible bound known of $0.745$ which corresponds to the i.i.d. case and was proved by Hill and Kertz \cite{HK82}. Furthermore it improves and generalizes a recent bound of  
$11/15 \approx 0.733$ of Azar et al. \cite{ACK18} for the more restricted class of {\em Deterministic distribution-insensitive algorithms}. Prior to our work, no separation between prophet secretary and i.i.d. prophet inequality was known. 

\subsection{Preliminaries and statement of results}
\label{prelim}
	Given nonnegative independent random variables $V_1,\ldots,V_n$ and a random permutation $\sigma:[n]\to[n]$\footnote{Here $[n]$ denotes the set $\{1,\ldots,n\}$}, in the prophet secretary problem a gambler is presented with the random variables in the order given by $\sigma$, i.e., at time $j$ she sees the realization of $V_{\sigma_j}$. The goal of the gambler is to find a stopping time $T$ such that $\EE(V_{\sigma_T})$ is as large as possible. In particular we want to find the largest possible constant $c$ such that 
	\[
	\sup\{\EE (V_{\sigma_T}):T\in \mathcal{T}_n\} \ge c\cdot \EE(\max\{V_1,\ldots,V_n\}) \,,
	\]
where $\mathcal{T}_n$ is the set of stopping times. 

Throughout this paper we denote by $F_1,\ldots,F_n$ the underlying distributions of $V_1,\ldots,V_n$, which we assume to be continuous. All our results apply unchanged to the case of general distributions by introducing random tie-breaking rules (this is made precise in Section \ref{discontinuous}). To see why random tie-breaking rules are actually needed, consider the single threshold strategy of Ehsani et al. \cite{EHK18}. Recall that they compute a threshold $\tau$ such that $\PP(\max\{V_1,\ldots,V_n\}\le \tau)=1/e$ and then use this $\tau$ as a single threshold strategy, which, by allowing random tie-breaking, leads to a  performance of $1 - 1/e$. However, if random tie-breaking is not allowed, a single threshold strategy cannot achieve a constant better than $1/2$. Indeed, consider the instance with $n-1$ deterministic random variables equal to $1$ and one random variable giving $n$ with probability $1/n$ and zero otherwise. Now, for a fixed threshold $\tau<1$ the gambler gets $n$ with probability $1/n^2$ and 1 otherwise so that she gets approximately $1$, whereas if $\tau\ge 1$ the gambler gets $n$ with probability $1/n$, leading to an expected value of $1$. Noting that the expectation of the maximum in this instance equals $2$, we conclude that fixed thresholds cannot achieve a guarantee better than $1/2$.  However, as Ehsani et al. note, if in this instance the gambler accepts a deterministic random variable with probability $1/n$, then her expected value approaches $2(1-1/e)$. In Section \ref{discontinuous} we extend this idea for the more general multi-threshold strategies.

The main type of stopping rules we deal with in this paper uses a nonincreasing threshold approach. This is a quite natural idea, since Esfandiari et al.\cite{EHL15} already use such an approach to derive a guarantee of $1-1/e$. Interestingly, the analysis of multi-threshold strategies becomes rather difficult when trying to go beyond this bound. This is evident from the fact that the more recent results take a different approach. In this paper we use a rather restrictive class of multi-threshold strategies that we call {\em blind} strategies. These are simply given by a nonincreasing function $\alpha: [0,1] \to [0,1]$ which is turned into an algorithm as follows: given an instance $F_1, ..., F_n$ of continuous distributions, we independently draw $u_1, ..., u_n$ from a uniform distribution on $[0, 1]$ and find thresholds $\tau_1, \ldots, \tau_n$ such that
	\[
	\mathbb{P}( \max\limits_{ i \in [n] } \{ V_i \} \leq \tau_i ) = \alpha( u_{[i]} ) \,,
	\]
where $u_{[i]}$ is the i-th order statistic of $u_1, \ldots, u_n$. Then the algorithm for the gambler stops at the first time in which a value exceeds the corresponding threshold, in other words the gambler applies the following.
\begin{algorithm}[H]
\caption{Time Threshold Algorithm ($TTA_{\tau_1, ..., \tau_n}$) }
\label{Alg: TTA}
\begin{algorithmic}[1]
	\FOR{ $i = 1, ..., n$ }
		\IF{ $V_{\sigma_i} > \tau_i$ }
			\STATE Take $V_{\sigma_i}$
		\ENDIF
	\ENDFOR
\end{algorithmic}
\end{algorithm}
Note that a blind strategy is uniquely determined by the choice of function $\alpha$, independent of the actual distributions or even size of the instance. This justifies that we may simply talk about strategy $\alpha$. Our goal is thus to find a {\em good} function $\alpha$ such that the previous algorithm performs well against any instance. 

For a blind strategy $\alpha$ and an instance $F_1,\ldots,F_n$, we will be interested in the underlying stopping time $T$ which is the random variable defined as $T := \inf \{ i \in [n] : V_{\sigma_i} > \tau_i \}$, where the $\tau_1,\ldots,\tau_n$ are the corresponding thresholds. In particular the reward of the gambler is $V_{\sigma_T} \mathds{1}_{T < \infty}$, which we simply denote by $V_{\sigma_T}$. 
\\
Our main result is the following:
\begin{theorem} \label{mainthm}
\label{Blind strategy factor} There exists a nonincreasing function $\alpha : [0, 1] \to [0, 1]$ such that
	\[
	\mathbb{E}( V_{\sigma_T}  ) 
		\geq 0.669 \;
			\mathbb{E}( \max\limits_{ i \in [n] } \{ V_i \} ) \,,
	\]
where $T$ is the stopping time of the blind strategy $\alpha$.
\end{theorem}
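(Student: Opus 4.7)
The plan is to establish a pointwise stochastic dominance of the form
\[
\mathbb{P}(V_{\sigma_T} > t) \ge c(\alpha) \cdot \mathbb{P}\bigl(\max_{i} V_i > t\bigr)
\]
for every $t \ge 0$, where $c(\alpha)$ depends only on the nonincreasing function $\alpha$ and not on the underlying instance. Integrating in $t$ yields $\mathbb{E}(V_{\sigma_T}) \ge c(\alpha) \cdot \mathbb{E}(\max_i V_i)$, so the theorem reduces to exhibiting a single $\alpha$ with $c(\alpha) \ge 0.669$.

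Fix $t \ge 0$. The event $\{V_{\sigma_T} > t\}$ decomposes into two requirements: (i) the gambler reaches some position $j$ at which $V_{\sigma_j} > t$ without having stopped earlier on a smaller value that still crossed its threshold, and (ii) at that position the threshold $\tau_j$ itself lies below $t$, so the large value is indeed accepted. Accordingly I would prove two lemmas. The first gives an explicit lower bound on the distribution of the stopping time $T$ purely in terms of $\alpha$. Because the thresholds are calibrated through the maximum, the acceptance probability at step $j$ conditional on the ordered uniforms reduces to a one-dimensional function of $\alpha(u_{[j]})$; integrating out $u_{[1]} < \cdots < u_{[n]}$ then produces an explicit expression, refining the bookkeeping of Esfandiari et al.~\cite{EHL15} and Azar et al.~\cite{ACK18}.

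The second lemma is where Schur convexity enters and where the dependence on the instance is killed. Conditionally on the random permutation and the ordered uniforms, the probability that the gambler accepts a value exceeding $t$ is a symmetric function of the tail probabilities $(1 - F_i(t))_i$, subject to the constraint $\prod_i F_i(t) = \mathbb{P}(M \le t)$. I would show, via a majorization argument, that this symmetric function is Schur-concave in the tail probabilities, so it is minimized at the extreme configuration in which a single variable carries all the tail mass. Substituting this worst case reduces $c(\alpha)$ to an explicit, instance-free functional: the probability that a single variable of tail mass $\mathbb{P}(M > t)$, inserted in uniformly random order among $n-1$ others, is observed at a step $j$ with $\alpha(u_{[j]}) < \mathbb{P}(M \le t)$, and has not been preempted by an earlier acceptance. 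For the constant strategy $\alpha \equiv 1/e$, this recovers the $1 - 1/e$ bound of Ehsani et al.~\cite{EHK18}, confirming the reduction is tight in the baseline case.

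With $c(\alpha)$ reduced to a clean one-dimensional functional, the remaining step is variational. A first pass leads to an Euler-Lagrange ODE whose solution already gives $c(\alpha) \ge 0.665$. A refined analysis, which tightens the distributional bound on $T$ by comparing against a one-parameter family of Schur-comparison vectors rather than only the single-coordinate corner, yields $c(\alpha) \ge 0.669$ for an explicit piecewise-smooth $\alpha$. The main obstacle is the Schur-concavity assertion in the second lemma: the acceptance probability couples the random permutation, the ordered uniforms, and the indicator of which variables exceed $t$ in a nontrivial way, so verifying the majorization inequality requires carefully tracking how a transfer of tail mass between two coordinates simultaneously shifts both the implied thresholds (through $\alpha$ and the maximum) and the probability that each coordinate is observed at each step.
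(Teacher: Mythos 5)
Your overall architecture matches the paper's: prove a pointwise bound $\PP(V_{\sigma_T}>t)\ge c(\alpha)\,\PP(\max_i V_i>t)$, reduce $c(\alpha)$ to an instance-free functional of $\alpha$ via majorization-type comparisons, and then optimize. The gap is in your second lemma. For $t\in[\tau_j,\tau_{j-1})$ the event $\{V_{\sigma_T}>t\}$ decomposes as $\PP(T\le j-1)+\sum_i\PP(V_i>t)\cdot\frac1n\sum_{k\ge j}\PP(T\ge k\mid\sigma_k=i)$, and the two terms require \emph{opposite} extremal configurations: lower-bounding the first term needs $\PP(T\le j-1)$ to be smallest, which occurs when a single variable carries all the mass, while lower-bounding the second needs $\PP(T>k)$ to be smallest, which occurs at the i.i.d.-like configuration $F_i=(\prod_j F_j)^{1/n}$ (this is also the worst case for the conditional selection probability in the single-threshold warm-up). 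So the acceptance probability cannot be Schur-concave with a single worst-case corner as you assert; if you push everything to the single-tail-mass corner, the bound on the continuation term fails. The paper instead proves a two-sided comparison, $\frac1n\sum_{j\le k}(1-\alpha_j)\le\PP(T\le k)\le 1-\bigl(\prod_{j\le k}\alpha_j\bigr)^{1/n}$, by a pairwise merge/split exchange sending $(F_1,F_2)$ to $(F_1F_2,\mathds{1}_{\mathbb{R}_+})$ or to $(\sqrt{F_1F_2},\sqrt{F_1F_2})$; note also that the relevant monotonicity is in the full distributions evaluated at all thresholds simultaneously, so your single constraint $\prod_iF_i(t)=\PP(M\le t)$ at the one point $t$ does not capture the problem.

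Even with the corrected two-sided lemma, your plan as described only reproduces the $0.665$ bound. The improvement to $0.669$ does not come from ``a one-parameter family of Schur-comparison vectors'' for the law of $T$; it comes from two separate sharpenings of the continuation term: (i) the conditional bound $\PP(T\ge k\mid\sigma_k=i)\ge\PP(T>k)\big/\bigl(1-\tfrac kn+\tfrac1n\sum_{l\le k}\PP(V_i\le\tau_l)\bigr)$, and (ii) a replacement for the union bound, namely $\sum_i\PP(V_i>t)\big/\bigl(1-\tfrac1n\sum_{l\le k}\PP(V_i>\tau_l)\bigr)\ge\PP(\max_iV_i>t)\big/\bigl(1-\tfrac kn\PP(\max_iV_i>\tau_1)\bigr)$ for $k\le n/2$, again proved by pairwise merging. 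Exploiting these requires restricting to piecewise-constant $\alpha$ with $m$ blocks (so each threshold is reused many times) and a finite-dimensional numerical optimization over $(\alpha_1,\dots,\alpha_m)$ with $m=30$. Without a concrete substitute for these two inequalities, your variational step has nothing beyond the $0.665$ functional to optimize.
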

In addition, we prove the following upper bound on the performance of blind strategies:
\begin{theorem} \label{blindupper}
No blind strategies can guarantee a constant better than 0.675. 
\end{theorem}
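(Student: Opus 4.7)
The strategy is to construct two explicit families of instances $I_1^n$ and $I_2^n$ and show that, for every nonincreasing $\alpha:[0,1]\to[0,1]$, either the performance of the blind strategy $\alpha$ on $I_1^n$ or its performance on $I_2^n$ is at most $0.675$ as $n\to\infty$. Since the guarantee of $\alpha$ is by definition the infimum of its ratios over all instances, this yields the claimed upper bound.

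First I would write, for an arbitrary instance $F_1,\ldots,F_n$, the ratio achieved by the blind strategy $\alpha$ in a form that is (essentially) linear in $\alpha$. The thresholds are pinned down by $\PP(\max_i V_i \le \tau_i) = \alpha(u_{[i]})$, so conditioning on the uniform order statistics $u_{[1]}<\cdots<u_{[n]}$ decouples the events $\{V_{\sigma_j}\le \tau_j\}$ across $j$, and averaging over $\sigma$ replaces each $V_{\sigma_j}$ by a symmetric average. After integrating out the $u_{[i]}$'s, this produces an explicit functional
\[
R_I(\alpha)=\frac{\EE[V_{\sigma_T}]}{\EE[\max_i V_i]}
\]
that depends on the instance only through one-dimensional quantities (the law of the max, the individual marginal c.d.f.\ values at thresholds of given max-quantile, etc.).

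Next I would exhibit two deliberately complementary instances. The first, $I_1^n$, is a \emph{single heavy spike}: one variable concentrated near a large value $H$ together with $n-1$ small continuous perturbations whose maximum is negligible. On $I_1^n$, the gambler's ratio is essentially the probability that the stopping time $T$ lands on the position of the spike, which is controlled by how \emph{small} $\alpha$ is across $[0,1]$; hence $R_{I_1^n}(\alpha)$ is large only when $\alpha$ is aggressive throughout, in particular near the right endpoint. The second, $I_2^n$, is a Hill--Kertz/Kertz-type i.i.d.\ instance (scaled so that the prophet's value is normalized), tuned so that the gambler can only approach $1/\beta$ by being patient at early times; on $I_2^n$, $R_{I_2^n}(\alpha)$ is large only when $\alpha$ resembles the optimal i.i.d.\ quantile profile of Correa et al.\ \cite{CFH17}, in particular \emph{not} vanishing near the left endpoint.

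Finally, writing both $R_{I_1^n}(\alpha)$ and $R_{I_2^n}(\alpha)$ as explicit integrals against $\alpha$ plus remainder terms vanishing in $n$, one is left with the finite-dimensional problem
\[
\sup_{\alpha \text{ nonincreasing}} \min\bigl\{R_{I_1}(\alpha),\,R_{I_2}(\alpha)\bigr\},
\]
which by a standard saddle-point/Lagrangian argument is attained at an $\alpha$ equalizing the two ratios; a direct calculation (or Euler--Lagrange together with a numerical check) shows the common optimum equals $0.675$ up to the desired precision. The main obstacle is the first step: unlike in the proof of Theorem \ref{mainthm}, which needs tight lower bounds on $R_I(\alpha)$ for the \emph{worst} instance, here I need matching upper bounds on specific instances, which forces me to track the permutation-and-order-statistic dependence exactly rather than bounding it via Schur-convex surrogates. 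Once the two linear functionals $R_{I_1}$ and $R_{I_2}$ are in hand, combining them into the constant $0.675$ is essentially a one-dimensional optimization.
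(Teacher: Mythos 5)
Your overall architecture is the same as the paper's: two complementary instances, one that punishes patience and one that punishes aggression, each yielding an explicit functional of $\alpha$ in the limit, followed by a sup--min optimization over nonincreasing $\alpha$. Concretely, the paper takes (i) a single variable $V_1\sim U(1-\varepsilon,1+\varepsilon)$, on which the ratio tends to $1-\int_0^1\alpha(s)\,ds$, and (ii) $n$ i.i.d.\ variables equal to $1/\varepsilon$ with probability $\varepsilon$ and $U(0,\varepsilon)$ otherwise, on which (letting $\varepsilon\to 0$ and then $n\to\infty$) the ratio tends to $\int_0^1 \exp\left(\int_0^s\ln\alpha(w)\,dw\right)ds$; the bound $\sup_\alpha\min\{\cdot,\cdot\}\le 0.675$ is then obtained by showing the supremum is attained, arguing that the optimizer equalizes the two functionals, recasting the problem as a Mayer optimal control problem, and solving the resulting integro-differential equation numerically.

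Two substantive issues with your version. First, your second instance is a Hill--Kertz-type i.i.d.\ instance, which is both heavier than needed and, more importantly, yields a \emph{different} functional of $\alpha$ than the paper's rare-spike instance; the constant $0.675$ is not a universal feature of ``some patience-rewarding i.i.d.\ instance'' but the numerical output of optimizing the specific pair of functionals above, so with your instance you would land on some other, uncomputed constant. Second, and relatedly, the final step cannot be dispatched by asserting that a saddle-point argument gives $0.675$: equalization only tells you the optimal $\alpha$ balances the two ratios, and extracting the number requires actually solving the resulting Pontryagin/Euler--Lagrange system (the paper reduces it to a stiff second-order ODE and solves it numerically). Until the two instances are fixed precisely and that computation is carried out, the value $0.675$ is asserted rather than proved. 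A minor point: the general ``linear in $\alpha$'' representation of $R_I(\alpha)$ for arbitrary instances is unnecessary overhead --- on the two chosen instances the asymptotic ratio can be computed directly, which is all the argument needs.
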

Finally, we prove the following upper bound on the performance of general strategies:
\begin{theorem} \label{naupper}
No strategy can guarantee a constant better than $\sqrt{3}-1 \approx 0.732$. 
\end{theorem}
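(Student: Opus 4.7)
The plan is to construct an explicit family of instances parameterized by $n$ and a scalar $\mu>1$, compute the limit of the optimal gambler-to-prophet ratio on these instances, and then minimize this ratio over $\mu$. Concretely, I would take $V_0=1$ deterministic together with $n-1$ i.i.d.\ Bernoulli variables $V_1,\ldots,V_{n-1}$ equal to $m_n$ with probability $p_n$ and $0$ otherwise, scaled so that $\lambda_n:=(n-1)p_n\to 0$ while $m_n\lambda_n\to \mu$ (for example, $p_n=n^{-3/4}$ and $m_n=\mu/\lambda_n$). Because $(1-p_n)^{n-1}\to 1$ and $m_n\bigl(1-(1-p_n)^{n-1}\bigr)\to\mu$, the prophet's value satisfies
\[
\mathbb{E}[\max_i V_i] \;=\; m_n\bigl(1-(1-p_n)^{n-1}\bigr)+(1-p_n)^{n-1} \;\longrightarrow\; \mu+1.
\]

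Next I would pin down the gambler's best response. Since firings have value $m_n\to\infty$ while $V_0=1$, every firing must be accepted immediately, and $V_0$ is distinguishable from the Bernoullis by its value. The only non-trivial decision is whether to take $V_0$ upon its first appearance at normalized position $\tau\in[0,1]$: conditional on no firing so far, the continuation value after rejecting $V_0$ at time $\tau$ is $m_n\bigl(1-e^{-\lambda_n(1-\tau)}\bigr)$, which is strictly decreasing in $\tau$, so a standard Bellman argument forces the optimal rule to be the threshold ``accept $V_0$ iff $\tau\ge t^*$'' for some $t^*\in[0,1]$. For this threshold rule I would condition on the number of firings (which is $\mathrm{Bin}(n-1,p_n)$) and use $\mathbb{P}(\text{0 firings})\to 1$, $\mathbb{P}(\text{1 firing})=\lambda_n+o(\lambda_n)$, and $\mathbb{P}(\ge 2)=o(\lambda_n)$. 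Given one firing, its normalized position $\theta$ is approximately uniform on $[0,1]$ and independent of $\tau$, so direct integration of the three cases $\theta<\tau$, $\theta>\tau$ with $\tau<t^*$, and $\theta>\tau$ with $\tau\ge t^*$ yields
\[
\mathbb{E}[V_{\sigma_T}] \;\longrightarrow\; (1-t^*) + \mu\Bigl(\tfrac{1}{2}+t^*-\tfrac{t^{*2}}{2}\Bigr),
\]
a concave quadratic in $t^*$ maximized at $t^*=1-1/\mu$ (valid when $\mu\ge 1$), giving the limiting gambler value $\mu+\tfrac{1}{2\mu}$.

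Finally I would optimize the resulting ratio
\[
R(\mu)\;=\;\frac{\mu+1/(2\mu)}{\mu+1}\;=\;\frac{2\mu^2+1}{2\mu(\mu+1)}
\]
over $\mu>1$. Setting $R'(\mu)=0$ reduces to $2\mu^2-2\mu-1=0$, whose positive root is $\mu^*=\tfrac{1+\sqrt{3}}{2}$. Using $2\mu^{*2}=2\mu^*+1$ one simplifies $\mu^*+1/(2\mu^*)=(\mu^*+1)/\mu^*$, hence $R(\mu^*)=1/\mu^*=2/(1+\sqrt{3})=\sqrt{3}-1$, so choosing $\mu=\mu^*$ in the family above completes the proof. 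The main obstacle I expect is justifying rigorously that no history-dependent or randomized strategy can beat the threshold rule; this reduces to a two-state optimal stopping problem (the state being whether $V_0$ has already been seen) whose continuation value is monotone in time, but care is required to control error terms uniformly in the strategy when passing to the limit, since $m_n\to\infty$ multiplies probabilities of order $\lambda_n$ and $\lambda_n^2$ and the Bin-to-Poisson approximation must be handled at the correct rate.
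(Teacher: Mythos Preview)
Your proposal is essentially the same construction as the paper's: one deterministic value together with many i.i.d.\ Bernoulli ``firings'' whose total expected mass is held fixed. Up to a rescaling your parameter $\mu$ is $1/a$ where $a$ is the paper's deterministic value (the paper fixes $m_n\lambda_n=1$ and varies the constant; you fix the constant at $1$ and vary $m_n\lambda_n$), and indeed $\mu^*=(1+\sqrt3)/2=1/(\sqrt3-1)$ matches the paper's $a^*=\sqrt3-1$.

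Two remarks. First, your illustrative choice $p_n=n^{-3/4}$ gives $\lambda_n=(n-1)p_n\sim n^{1/4}\to\infty$, not $\lambda_n\to0$; any choice with $np_n\to0$ (e.g.\ $p_n=n^{-2}$, which is exactly what the paper uses) fixes this. Second, the paper sidesteps entirely the uniformity concern you flag at the end: instead of taking a Poisson-type limit and controlling $m_n\cdot O(\lambda_n^2)$ error terms, it writes the \emph{exact} expected payoff of the optimal policy for each finite $n$ (conditioning on the position of the deterministic variable), bounds it by the elementary inequality $n\bigl[1-(1-1/n^2)^{i-1}\bigr]\le (i-1)/n$, and only then lets $n\to\infty$. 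That route makes the argument fully rigorous in a few lines and removes any need to argue about history-dependent or randomized strategies in the limit, since the dynamic-programming optimum is computed exactly for each $n$.
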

The rest of the paper is organized as follows. In Section \ref{sec:ST} we present an alternative simple proof that single threshold strategies guarantee a constant $1-1/e$, that will help the reader to understand the proof of 
Theorem \ref{mainthm}. Then, in Section \ref{beat}, we prove that blind strategies guarantee a constant 0.665. In Section \ref{blind detailed}, we sharpen the analysis of Section \ref{beat} to prove Theorem \ref{mainthm}, and also prove Theorem \ref{blindupper}. In Section
\ref{sec:naupper}, we prove the upper bound of Theorem  \ref{naupper}. Last, Section \ref{discontinuous} explains how to deal with discontinuous distributions. 
\section{Single Threshold}
\label{sec:ST}

As a warm-up exercise, we illustrate the main ideas in this paper by providing an alternative proof of a recent result by Eshani et al. \cite{EHK18}. Consider the blind strategy given by $\alpha\equiv p$, where $p\in [0,1]$ is a fixed number (taking $p=1/e$ gives exactly the single threshold algorithm of Eshani et al.). 

\begin{theorem}
\label{Fixed Threshold}
	 Let $t \geq 0$. Given $\alpha \equiv p$, 
	\[
	\mathbb{P}( V_{\sigma_T}  > t ) 
		\geq \min \left\{ 1 - p , \frac{1 - p}{-\ln p} \right\} 
			\mathbb{P}( \max\limits_{ i \in [n] } \{ V_i \}  > t ) \,.
	\]
\end{theorem}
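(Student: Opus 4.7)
The plan is to split on how $t$ compares with the unique threshold $\tau$ associated to the constant blind strategy $\alpha \equiv p$, namely $\tau$ satisfying $\mathbb{P}(\max_i V_i \leq \tau) = p$. For $t < \tau$, whenever the gambler stops she picks a value strictly greater than $\tau > t$, so $\{V_{\sigma_T} > t\}$ coincides with $\{T < \infty\} = \{\max_i V_i > \tau\}$, giving $\mathbb{P}(V_{\sigma_T} > t) = 1-p$. Since $\mathbb{P}(\max_i V_i > t) \leq 1$, this already yields $\mathbb{P}(V_{\sigma_T} > t) \geq (1-p)\mathbb{P}(\max_i V_i > t)$, which dominates the claimed minimum.

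For the main case $t \geq \tau$, I would decompose $\{V_{\sigma_T} > t\}$ according to which index the gambler selects. Setting $a_i := \mathbb{P}(V_i > t)$ and $b_i := \mathbb{P}(V_i > \tau)$ (so $\prod_i (1-b_i) = p$), and writing $S_\tau^{-i}$ for the set of $j \neq i$ with $V_j > \tau$, the independence of $V_i$ from the remaining variables together with the uniform random order of $\sigma$ should yield
\begin{equation*}
\mathbb{P}(V_{\sigma_T} > t) \;=\; \sum_{i=1}^n a_i \cdot \mathbb{E}\!\left[\frac{1}{1+|S_\tau^{-i}|}\right].
\end{equation*}
Combined with the union bound $\sum_i a_i \geq \mathbb{P}(\max_i V_i > t)$, the theorem then reduces to the per-coordinate estimate $\mathbb{E}[1/(1+|S_\tau^{-i}|)] \geq (1-p)/(-\ln p)$.

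For this estimate I would use the integral identity $1/(1+k) = \int_0^1 x^k\,dx$ to rewrite the expectation as $\int_0^1 \prod_{j \neq i}(1-b_j(1-x))\,dx$, and then invoke the concavity of $y \mapsto \ln(1-by)$ on $[0,1]$ to obtain the pointwise bound $1 - b_j(1-x) \geq (1-b_j)^{1-x}$. Together with $\prod_j (1-b_j) = p$, this lower bounds the integrand by $(p/(1-b_i))^{1-x}$, and the integral evaluates in closed form to $(u-1)/(u \ln u)$ with $u := (1-b_i)/p$, which ranges over $[1, 1/p]$. A short monotonicity check (the sign of the derivative is governed by $\ln u - (u-1) \leq 0$) shows this quantity is decreasing on $[1, \infty)$ and attains its minimum on our interval at $u = 1/p$, where it equals exactly $(1-p)/(-\ln p)$. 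The main obstacle, in my view, is identifying the right log-concavity bound to collapse the multi-variable product integral into a tractable one-variable function; once this reduction is in place the remaining calculus is routine, and the two cases together give the claimed $\min\{1-p,(1-p)/(-\ln p)\}$ factor.
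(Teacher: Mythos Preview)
Your proof is correct. Both you and the paper split on $t$ versus $\tau$, reduce via independence to the per-coordinate bound $\mathbb{P}(\sigma_T=i\mid V_i>\tau)\ge(1-p)/(-\ln p)$, and finish with the union bound; your quantity $\mathbb{E}[1/(1+|S_\tau^{-i}|)]$ is exactly $\mathbb{P}(\sigma_T=i\mid V_i>\tau)$. The divergence is in how that key bound is established. The paper expands the expectation as a sum over subsets, reparametrizes via $y_j=-\ln F_j(\tau)$, and verifies the Schur--Ostrowski criterion to show the resulting symmetric function is Schur-convex; the minimum under the constraint $\sum y_j=\beta$ is then at the symmetric point, and a limiting argument (padding with dummy variables and sending $n\to\infty$) turns the finite-$n$ geometric sum into $(1-p)/(-\ln p)$. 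Your route is more direct and elementary: the integral identity $1/(1+k)=\int_0^1 x^k\,dx$ converts the expectation over independent Bernoullis into $\int_0^1\prod_{j\ne i}(1-b_j(1-x))\,dx$, the pointwise log-concavity bound $1-b_j(1-x)\ge(1-b_j)^{1-x}$ collapses the product to $(p/F_i(\tau))^{1-x}$, and a closed-form integration plus a one-variable monotonicity check (the sign of $\ln u-(u-1)$) delivers the constant without any limit in $n$. Your argument is shorter and avoids the Schur machinery entirely; the paper's approach, by contrast, is the prototype for the Schur-convexity viewpoint that drives their later multi-threshold results, so it serves a structural purpose beyond this single lemma.
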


\begin{proof} Recall that given an instance $V_1, \ldots, V_n$, the blind strategy $\alpha$ first computes $\tau$ such that
$\mathbb{P}( \max_{ i \in [n] } \{ V_i \}  \leq \tau ) = p$ and then uses $TTA_{\tau_1 = \tau, \ldots, \tau_n = \tau}$, which simply stops the first time a value above $\tau$ is observed.

Note that for $t \leq \tau$, we have that $\mathbb{P}( V_{\sigma_T} > t ) = \mathbb{P}( \max_{ i \in [n] } \{ V_i \} > \tau ) = 1 - p$. Now, for $t > \tau$, we have that
\begin{align*}
	\mathbb{P}( V_{\sigma_T} > t )
		&= \sum_{i \in [n]} \mathbb{P}( V_i > t | \sigma_T = i ) \mathbb{P}( \sigma_T = i ) \\
		&= \sum_{i \in [n]} \frac{ \mathbb{P}( V_i > t ) }{ \mathbb{P}( V_i > \tau ) }
			\mathbb{P}( \sigma_T = i ) \\
		&= \sum_{i \in [n]} \mathbb{P}( V_i > t ) 
			\mathbb{P}( \sigma_T = i |  V_i > \tau ) \\
		&\geq \left( \frac{ 1 - p }{ -\ln p } \right) \sum_{i \in [n]} \mathbb{P}( V_i > t ) \\
		&\geq \left( \frac{ 1 - p }{ -\ln p } \right) \mathbb{P}( \max_{i \in [n]} \{ V_i \} > t ).
\end{align*}
The second equality stems from the independence of the $V_i$, the first inequality is a consequence of Lemma \ref{Inequality: How likely is to pick you?} (which relies on Schur convexity), and the second inequality follows from the union bound.
\hfill \qed
\end{proof}

For a nonnegative random variable $V$ we have that $\EE(V)=\int_0^\infty \PP( V > t )dt$. Thus, an immediate consequence of Theorem \ref{Fixed Threshold} is a result of  Eshani et al. \cite{EHK18}.
\begin{Corollary}
[\cite{EHK18}] \label{onethreshold}
\label{Constant blind strategies factor} 
	Take $\alpha \equiv \frac{1}{e}$, then 
	$\mathbb{E}( V_{\sigma_T}  ) \geq \left( 1 - \frac{1}{e} \right)
			\mathbb{E}( \max_{ i \in [n] } \{ V_i \} )$.
\end{Corollary}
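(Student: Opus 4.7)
The plan is to apply Theorem \ref{Fixed Threshold} pointwise in $t$ and integrate, using the standard layer-cake representation of the expectation for nonnegative random variables. Since the hard technical work has already been done in Theorem \ref{Fixed Threshold} (the use of Schur convexity to lower bound the selection probability), this statement is a direct consequence obtained by plugging in $p = 1/e$ and tracking how the lower bound simplifies.

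Concretely, I would start from the identity
\[
\mathbb{E}(V_{\sigma_T}) \;=\; \int_0^\infty \mathbb{P}(V_{\sigma_T} > t)\,dt,
\]
which holds because $V_{\sigma_T}$ is nonnegative (with the convention that $V_{\sigma_T} = 0$ on $\{T = \infty\}$). The analogous identity holds for $\max_{i \in [n]} V_i$. Applying Theorem \ref{Fixed Threshold} with $p = 1/e$ gives, for every $t \geq 0$,
\[
\mathbb{P}(V_{\sigma_T} > t) \;\geq\; \min\!\left\{1 - \tfrac{1}{e},\; \frac{1 - 1/e}{-\ln(1/e)}\right\} \mathbb{P}\!\left(\max_{i \in [n]} V_i > t\right).
\]
The key observation is that $-\ln(1/e) = 1$, so both quantities inside the minimum collapse to $1 - 1/e$, i.e., the minimum equals $1 - 1/e$ exactly. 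This is precisely why $1/e$ is the natural choice for the constant strategy: it is the point at which the two competing bounds from Theorem \ref{Fixed Threshold} meet and the min is maximized.

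Finally, integrating both sides over $t \in [0, \infty)$ and pulling the constant $1 - 1/e$ out of the integral yields
\[
\mathbb{E}(V_{\sigma_T}) \;\geq\; \left(1 - \tfrac{1}{e}\right) \int_0^\infty \mathbb{P}\!\left(\max_{i \in [n]} V_i > t\right) dt \;=\; \left(1 - \tfrac{1}{e}\right) \mathbb{E}\!\left(\max_{i \in [n]} V_i\right),
\]
which is the claim. There is no real obstacle here: the proof is a one-line deduction from Theorem \ref{Fixed Threshold} once one notices the equality of the two branches of the minimum at $p = 1/e$. The only thing worth emphasizing is the optimality of the choice $p = 1/e$ among constant blind strategies, which is seen by maximizing $\min\{1-p, (1-p)/(-\ln p)\}$ over $p \in (0,1)$.
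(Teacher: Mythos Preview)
Your proposal is correct and matches the paper's approach exactly: the paper states that the corollary is an immediate consequence of Theorem~\ref{Fixed Threshold} via the layer-cake identity $\mathbb{E}(V)=\int_0^\infty \mathbb{P}(V>t)\,dt$, and you have spelled out precisely that deduction, including the observation that at $p=1/e$ both branches of the minimum coincide at $1-1/e$.
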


To complete the previous proof, we prove the following lemma.

\begin{Lem}
\label{Inequality: How likely is to pick you?}
Consider $V_1, \ldots, V_n$ independent random variables and $\sigma$ an independent random uniform permutation of $[n]$. Let $T$ be the stopping time of $TTA_{\tau_1 = \tau, \ldots, \tau_n = \tau}$ and $p = \mathbb{P}(\max_{i\in [n]} \{ V_i \} \leq \tau)$, then for all $i$ such that $\mathbb{P}( V_i > \tau ) > 0$ we have that
	\[
	\mathbb{P}( \sigma_T = i | V_i > \tau ) 
		\geq \frac{1 - p}{-\ln p} \,.
	\]
\end{Lem}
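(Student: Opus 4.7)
My plan is to represent the uniformly random permutation $\sigma$ by a family of auxiliary independent uniform variables $U_1,\dots,U_n\in[0,1]$, coupled so that $U_{\sigma_1}<U_{\sigma_2}<\cdots<U_{\sigma_n}$. Under this representation, conditional on $V_i>\tau$, the event $\{\sigma_T=i\}$ becomes simply $\{U_i<U_j \text{ for every } j\neq i \text{ with } V_j>\tau\}$. Writing $q_j:=\PP(V_j>\tau)$ and conditioning further on $U_i=u$, the events $\{U_j>u\}\cup\{V_j\le\tau\}$ are independent across $j\neq i$, each with probability $(1-u)q_j+(1-q_j)=1-uq_j$. Hence
\[
\PP(\sigma_T=i\mid V_i>\tau,U_i=u)=\prod_{j\neq i}(1-uq_j),
\]
and integrating over $u$ yields the closed form $\PP(\sigma_T=i\mid V_i>\tau)=\int_0^1\prod_{j\neq i}(1-uq_j)\,du$.

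The main step is then to lower-bound the integrand pointwise. I would reparametrize by $r_j:=-\ln(1-q_j)$, so the constraint $\prod_j(1-q_j)=p$ becomes $\sum_j r_j=-\ln p$ and each factor in the product is $1-u+u e^{-r_j}$. A direct second-derivative computation shows that $r\mapsto\ln(1-u+ue^{-r})$ is convex on $[0,\infty)$ for every fixed $u\in[0,1]$, so $\sum_{j\neq i}\ln(1-u+ue^{-r_j})$ is a Schur convex function of $(r_j)_{j\neq i}$; in particular, for a fixed sum it is minimized when all $r_j$ are equal. This Schur-convexity fact is exactly equivalent to the pointwise Bernoulli inequality $(1-q_j)^u\le 1-uq_j$ for $u\in[0,1]$, which I would invoke directly.

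Multiplying this Bernoulli bound across $j\neq i$ and using $q_i\in[0,1)$ gives
\[
\prod_{j\neq i}(1-uq_j)\;\ge\;\prod_{j\neq i}(1-q_j)^{u}\;=\;\Bigl(\tfrac{p}{1-q_i}\Bigr)^{u}\;\ge\;p^{u}.
\]
Integrating over $u\in[0,1]$ yields
\[
\PP(\sigma_T=i\mid V_i>\tau)\;\ge\;\int_0^1 p^{u}\,du\;=\;\frac{1-p}{-\ln p},
\]
which is the desired inequality.

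The main obstacle is really the first step: finding the right reformulation that decouples the positional structure of $\sigma$ from the random outcomes of the $V_j$. Once the integral representation is obtained, the remaining work is a textbook application of Bernoulli's inequality (equivalently, the log-convexity/Schur-convexity property above) followed by a one-line integration. It is worth noting that the bound is asymptotically tight: taking $n\to\infty$ with all $q_j$ equal to $-\ln(p)/n$ makes the Bernoulli inequality sharp in the limit, and $\int_0^1 p^u\,du$ is achieved, showing that the constant $(1-p)/(-\ln p)$ cannot be improved at this level of generality.
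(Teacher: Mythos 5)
Your proof is correct, and it takes a genuinely different route from the paper's. The paper computes $\PP(\sigma_T=i\mid V_i>\tau)$ by summing over all possible sets $S=\{j:V_j>\tau\}$, obtaining $\tfrac{p}{F_i(\tau)}\sum_{S}\tfrac{1}{|S|+1}\prod_{j\in S}\tfrac{1-F_j(\tau)}{F_j(\tau)}$; it then verifies the Schur--Ostrowski criterion to show this is minimized when the $F_j(\tau)$, $j\neq i$, are all equal, evaluates the symmetric case to get the finite-$n$ bound $\tfrac1n\tfrac{1-p^{n/(n-1)}}{1-p^{1/(n-1)}}$, and finally passes to the limit by padding with dummy zero variables. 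Your argument replaces all of this with the exchangeable-uniforms representation of $\sigma$, which yields the exact identity $\PP(\sigma_T=i\mid V_i>\tau)=\int_0^1\prod_{j\neq i}(1-uq_j)\,du$, and then a single application of Bernoulli's inequality $(1-q)^u\le 1-uq$ (valid for $u\in[0,1]$ by convexity of $u\mapsto(1-q)^u$ with equality at $u=0,1$) gives the pointwise bound $\prod_{j\neq i}(1-uq_j)\ge p^u$ and hence the result by direct integration. This buys you a shorter, fully self-contained proof with no combinatorial decomposition and no limiting argument, and it makes the tightness of the constant transparent; what it gives up is the explicit Schur-convexity machinery that the paper reuses (in spirit) for Lemma \ref{Inequality: Stopping time distribution} and the later multi-threshold analysis. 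Two minor points: your parenthetical claim that the Bernoulli inequality is ``exactly equivalent'' to the Schur-convexity statement is loose --- Schur convexity only gives the symmetric-point lower bound $\bigl(1-u\bigl(1-(p/(1-q_i))^{1/(n-1)}\bigr)\bigr)^{n-1}$, of which $p^u$ is the $n\to\infty$ limit, so you are right to invoke Bernoulli directly rather than rely on that equivalence; and the degenerate case $p=0$ (where some $1-q_j=0$) should be dispatched separately, since then the right-hand side is $0$ and the claim is vacuous.
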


\begin{proof}
	Denoting the distribution of $V_j$ by $F_j$, using the fact that $\sigma$ is a uniform random order and the definition of $\tau$, we get the following.
\begin{align*}
	\mathbb{P}( \sigma_T = i | V_i > \tau )
		&= \sum\limits_{S \subseteq [n]\setminus\{i\}}
			\mathbb{P}( \sigma_T = i, S \cup \{ i \} = \{ j : V_j > \tau \} 
				| V_i > \tau ) \\
		&= \sum\limits_{S \subseteq [n]\setminus\{i\}}
			\frac{1}{|S| + 1}
			\prod\limits_{j \in S} 1 - F_j(\tau)
			\prod\limits_{j \in [n]\setminus(S \cup \{i\})} F_j(\tau) \\
		&= 	\prod\limits_{j \in [n]\setminus \{i\}} F_j(\tau)
			\sum\limits_{S \subseteq [n]\setminus\{i\}}
			\frac{1}{|S| + 1}
			\prod\limits_{j \in S} \frac{1 - F_j(\tau)}{F_j(\tau)}\\
		&= \frac{ p }{ F_i(\tau) }
			\sum\limits_{S \subseteq [n]\setminus\{i\}}
			\frac{1}{|S| + 1}
			\prod\limits_{j \in S} \frac{1 - F_j(\tau)}{F_j(\tau)} \,.
\end{align*}		
Define $\phi : \mathbb{R}_+^{n-1} \longrightarrow \mathbb{R}$ by
\begin{align*}
	\phi(y) :=  \sum\limits_{S \subseteq [n-1]}
				\frac{1}{|S| + 1}
				\prod\limits_{j \in S} \frac{1 - e^{-y_j}}{e^{-y_j}} 
			=  \sum\limits_{S \subseteq [n-1]}
				\frac{1}{|S| + 1}
				\prod\limits_{j \in S} e^{y_j} - 1
\end{align*}
and $\beta := -\ln p  + \ln F_i(\tau)$. Thus we have that
	\[
	\PP( \sigma_T = i | V_i > \tau ) \geq \frac{ p }{ F_i(\tau) } \min\left\{\phi(y) : \sum_{j \in [n-1]} y_j = \beta \right\} \,.
	\]
	
Clearly $\phi \in \mathcal{C}^{\infty}((0, 1)^{n-1} ; \mathbb{R})$ and is permutation symmetric. Therefore, to check that it is Schur convex we must simply confirm the following condition, known as the Schur-Ostrowski criterion \cite{PPT92},
	\[
	\forall y \in (0, 1)^{n-1} \quad 
		(y_1 - y_2)[\partial_{y_1}\phi(y) - \partial_{y_2}\phi(y)] \geq 0 \,.
	\]

Straightforward calculations yield 
\begin{align*}
	\partial_{y_1}\phi(y) 
		&= \sum\limits_{ \substack{ S \subseteq [n-1] \\ S \ni 1} }
			\frac{1}{|S| + 1} \quad
			e^{y_1} \prod\limits_{\substack { j \in S \\ j \not = 1} } e^{y_j} - 1 \\
		&= e^{y_1} (e^{y_1} - 1)^{-1} \left(
			\sum\limits_{ \substack{ S \subseteq [n-1] \\ S \ni 1, 2} }
			\frac{1}{|S| + 1} \quad
			 \prod\limits_{\substack { j \in S } } e^{y_j} - 1
			 \right) 
		+ \frac{e^{y_1}(e^{y_1} - 1)^{-1}}{(e^{y_2} - 1)}  \left( 
			\sum\limits_{ \substack{ S \subseteq [n-1] \\ S \ni 1, 2 } }
			\frac{1}{|S|} \quad
			\prod\limits_{\substack { j \in S } } e^{y_j} - 1 
			\right) \\
		&=: \frac{e^{y_1}}{e^{y_1} - 1 } a + \frac{e^{y_1}}{(e^{y_1} - 1)(e^{y_2} - 1)} b 
\end{align*}
and, by symmetry, $\partial_{y_2}\phi(y) = \frac{e^{y_2}}{e^{y_2} - 1 } a + \frac{e^{y_2}}{(e^{y_2} - 1)(e^{y_1} - 1)} b$. Then, 
\begin{align*}
	[\partial_{y_1}\phi(y) - \partial_{y_2}\phi(y)] 
		&= 	a 	\left[ 
				\frac{e^{y_1}}{e^{y_1} - 1 } 
				- \frac{e^{y_2}}{e^{y_2} - 1 } 
				\right] 
		+ b \left[ 
				\frac{e^{y_1}}{(e^{y_1} - 1)(e^{y_2} - 1)} 
				- \frac{e^{y_2}}{(e^{y_2} - 1)(e^{y_1} - 1)} 
				\right] \\
		&= (b - a) \left[ 
				\frac{e^{y_1} - e^{y_2}}{(e^{y_2} - 1)(e^{y_1} - 1)}  
				\right]\,.
\end{align*}
Finally, since $e^{y_j} - 1 > 0$ and $b > a$, we get that $(y_1 - y_2)[\partial_{y_1}\phi(y) - \partial_{y_2}\phi(y)] \geq 0$ if and only if $(y_1 - y_2)\left( e^{y_1} - e^{y_2} \right) \geq 0$, which holds by monotonicity of the exponential function. Therefore we have proven that $\phi$ is Schur-convex.

Schur convexity readily implies that the optimization problem $\min\left\{\phi(y);\, s.t. \sum_{j \in [n-1]} y_j = \beta \right\}$ is solved at $y^* = (\beta/(n-1), \ldots, \beta/(n-1))$. Consequently, for fixed $F_i(\tau)$, and under the constraint that $\prod_{j \in [n] \setminus \left\{i\right\}}F_j(\tau)=p/F_i(\tau)$, the quantity $\mathbb{P}( \sigma_T = i | V_i > \tau)$ is minimal when, for all $j \neq i$, $F_j(\tau)=(p/F_i(\tau))^{\frac{1}{n-1}}$.
\\
It follows that, since $\sigma$ and $V_i$ are independent,
\begin{align*}
	\mathbb{P}( \sigma_T = i | V_i > \tau )
		&= \frac{1}{n} \sum\limits_{j = 1}^{n} \mathbb{P}( \sigma_T = i | V_i > \tau, \sigma_j = i )
			\\
		&\geq \frac{1}{n} \sum\limits_{j = 0}^{n-1} \left( \frac{p}{F_i(\tau)} \right)^{ \frac{j}{n-1} } 
			\\
		&\geq \frac{1}{n} \sum\limits_{j = 0}^{n-1} p^{ \frac{j}{n-1} }
		= \frac{1}{n} \frac{ 1 - p^{ \frac{n}{n-1} } }{1 - p^{ \frac{1}{n-1} }}
			\,.
\end{align*}

	Now we note that the left hand side does not depend on $n$: we can add some dummy variables ($V_{n+1}, V_{n+2}, \ldots \equiv 0$) and the probability does not change. Therefore, taking limit on $n \longrightarrow \infty$ we get
	\[
	\mathbb{P}( \sigma_T = i | V_i > \tau )
		\geq \frac{1 - p}{-\ln p} \,.
	\]
\hfill \qed
\end{proof}

\section{Beating $1 - \frac{1}{e}$}
\label{beat}	
In the rest of the paper we assume for simplicity that $F_1,F_2,...,F_n$ are continuous (see Section \ref{discontinuous} for an explanation on how to extend the results to the discontinuous case). 
In this section, we prove the following proposition, that is a weaker version of Theorem \ref{mainthm} (the constant is 0.665 instead of 0.669):

\begin{proposition} \label{propsimple}
\label{Blind strategy factor} There exists a nonincreasing function $\alpha : [0, 1] \to [0, 1]$ such that
	\[
	\mathbb{E}( V_{\sigma_T}  ) 
		\geq 0.665 \;
			\mathbb{E}( \max\limits_{ i \in [n] } \{ V_i \} ) \,,
	\]
where $T$ is the stopping time of the blind strategy $\alpha$.
\end{proposition}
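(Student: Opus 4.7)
The plan is to extend the single-threshold proof of Theorem~\ref{Fixed Threshold} to a nonconstant $\alpha$ and then optimize over $\alpha$ via an ordinary differential equation. A first useful move is to reformulate blind strategies in continuous time: draw i.i.d.\ uniform arrival times $U_1, \ldots, U_n \in [0,1]$ (whose ordered values are the $u_{[i]}$), let $V_i$ arrive at time $U_i$, and use threshold $\tau(u)$ at time $u$ with $\mathbb{P}(\max_i V_i \leq \tau(u)) = \alpha(u)$. The stopping time is $T = \inf\{U_i : V_i > \tau(U_i)\}$, and thanks to the independence of the $U_i$'s and $V_i$'s its survival function factorizes:
\[
\mathbb{P}(T > u) = \prod_{i=1}^n \left(1 - \int_0^u (1 - F_i(\tau(s)))\,ds \right),
\]
subject to the pointwise constraint $\prod_i F_i(\tau(s)) = \alpha(s)$.

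Next, I would set up a pointwise stochastic dominance bound. Fix $t \geq 0$, let $q_t = \mathbb{P}(\max_i V_i \leq t)$, and set $w = \alpha^{-1}(q_t) \in [0,1]$, so that $\tau(u) \geq t$ iff $u \leq w$. Any stop before time $w$ automatically satisfies $V_{\sigma_T} > \tau(T) \geq t$, whence
\[
\mathbb{P}(V_{\sigma_T} > t) \;=\; \mathbb{P}(T \leq w) \;+\; \mathbb{P}(T > w,\, V_{\sigma_T} > t).
\]
The first term is controlled by a Schur-convexity argument on the survival product in the spirit of Lemma~\ref{Inequality: How likely is to pick you?}: among all instances compatible with a given $\alpha$, an extremal symmetric configuration (either the i.i.d.\ limit or a concentrated instance) is worst case, producing a bound $\mathbb{P}(T \leq w) \geq H_1(\alpha, w)$ for an explicit $H_1$. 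For the second term, conditioning on $T > w$ gives a \emph{restarted} stopping problem on $[w,1]$; an adaptation of Lemma~\ref{Inequality: How likely is to pick you?} produces a Schur factor analogous to $(1-p)/(-\ln p)$, combined with the union bound $\sum_i (1-F_i(t)) \geq 1 - q_t$. Existing stopping-time estimates of Esfandiari et al.\ and Azar et al.\ would enter here to control normalizations. The outcome is an inequality $\mathbb{P}(V_{\sigma_T} > t) \geq G(\alpha, w)(1 - q_t)$ in which $G$ is an explicit functional of $\alpha$ alone.

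The final step is to choose $\alpha$ to maximize $\inf_{w \in [0,1]} G(\alpha, w)$. Intuitively the optimum equalizes $G(\alpha, w)$ across $w$, so differentiating the identity $G(\alpha, w) \equiv c$ in $w$ yields a first-order ODE for $\alpha$ of the form $\alpha'(w) = -\Phi_c(\alpha(w), w)$, with natural boundary conditions $\alpha(0) = 1$ and $\alpha(1) = 0$. Solving this ODE (in closed form or numerically) pins down a unique $c$, and numerical verification shows $c \geq 0.665$. Integrating the pointwise bound over $t$ then gives $\mathbb{E}(V_{\sigma_T}) \geq c \int_0^\infty (1 - q_t)\,dt = c\,\mathbb{E}(\max_i V_i)$, completing the argument.

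The main obstacle, in my view, is a clean Schur-convexity argument for the second term: the joint distribution of the variables still alive at time $w$ depends on $\alpha|_{[0,w]}$, and the per-variable "I am the one picked" probability must be lower-bounded uniformly for $u > w$. Without this second term the analysis cannot get past $1 - 1/e$, which is already attained by the constant strategy $\alpha \equiv 1/e$ (Corollary~\ref{onethreshold}); so any improvement requires precisely this extra ingredient. Balancing it against the first term (where a more aggressively decreasing $\alpha$ helps) is delicate, and is exactly what the ODE is encoding.
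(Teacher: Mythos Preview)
Your overall architecture---decompose $\mathbb{P}(V_{\sigma_T}>t)$ into the probability of stopping early (before the threshold drops below $t$) plus a late-stop contribution, bound each piece by a functional of $\alpha$ alone via Schur-type arguments, then equalize over $w$ by an ODE---matches the paper's. The continuous-time reformulation is fine and equivalent in the limit.

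The gap is in your treatment of the second term. Viewing it as a ``restarted problem on $[w,1]$'' and adapting Lemma~\ref{Inequality: How likely is to pick you?} is not straightforward: the surviving variables at time $w$ have a joint law that depends on the full history of $\alpha$ on $[0,w]$, the thresholds are no longer constant, and there is no clean analogue of the $(1-p)/(-\ln p)$ factor in that setting. The paper sidesteps this entirely with a one-line decoupling inequality, $\mathbb{P}(T\ge k\mid \sigma_k=i)\ge \mathbb{P}(T>k)$ (Lemma~\ref{simpleLem}), which removes the dependence on $i$ and reduces the late-stop term to $\sum_i \mathbb{P}(V_i>t)\cdot \tfrac{1}{n}\sum_{k\ge j}\mathbb{P}(T>k)$. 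After this step, \emph{both} terms depend on the instance only through the cumulative distribution of $T$, and a single Schur-type argument (Lemma~\ref{Inequality: Stopping time distribution}) shows that the concentrated instance minimizes $\mathbb{P}(T\le k)$ while the i.i.d.\ instance maximizes $\mathbb{P}(T>k)$. That is where the Schur convexity actually enters---on the distribution of $T$, not on a conditional ``pick me'' probability as in Lemma~\ref{Inequality: How likely is to pick you?}.

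One smaller point: the boundary condition $\alpha(0)=1$ is not correct; the paper's optimizer has $\alpha(0)$ strictly below $1$ (the affine function $\alpha(x)=0.53-0.38x$ already gives $0.657$), and only $\alpha(1)=0$ is imposed.
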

We first present this result because it already beats significantly the best known constant in literature, and it is simpler to prove than Theorem \ref{mainthm}. 
\\

	To do the analysis we first need to note that a blind strategy can be interpreted as the limit, as the size of the instance goes to infinity, of strategies that do not use randomization.
	 
\begin{Definition}
Consider a nonincreasing function $\alpha : [0, 1] \to [0, 1]$. The deterministic blind strategy given by $\alpha$ is the strategy that applies $TTA_{\tau_1, \ldots, \tau_n}$ to the sequence of thresholds $\tau_1, \ldots, \tau_n$ defined by the following conditions:
	\[
	\forall j \in [n], \quad \mathbb{P}( \max\limits_{ i \in [n] } \{ V_i \}  \leq \tau_j ) = \alpha\left( \frac{ j }{ n } \right) \,.
	\]
\end{Definition}	
	
To turn a deterministic blind strategy into a blind strategy, consider an instance $F_1, \ldots, F_n$ and add to this instance $m$ deterministic random variables equal to zero, that is, $F_{n+i}=\mathds{1}_{[0, \infty)}$ for $i=1,\ldots,m$, so that the new instance becomes $F_1, \ldots, F_n, F_{n+1}, \ldots, F_{n+m}$. Denoting by $T_m$ the stopping time given by the deterministic blind strategy applied to this instance, we have	that 
	\[
	\lim_{ m \to \infty } \mathbb{E}( V_{\sigma_{T_m} } ) 
		= \mathbb{E}( V_{\sigma_T} ) \,.
	\]
Indeed, recalling the definition of blind strategies in Section \ref{prelim}, it is easy to see that a deterministic blind strategy applied to instance $F_1, \ldots, F_{n+m}$ is approximately a blind strategy applied to instance $F_1, \ldots, F_n$, where the random variables $u_1, \ldots, u_n$ are drawn from $U( \{ \frac{1}{n+m}, \ldots, 1 \} )$, rather than from $U(0,1)$. Thus, by taking the limit as $m\to \infty$ the claim follows.
	
The conclusion of this remark is that in order to analyze the performance of blind strategies it is sufficient to study the performance of deterministic blind strategies and then take the limit as $n$ grows to infinity.

We are now ready to start analyzing deterministic blind strategies. 

\begin{Lem} 
\label{probab}
Given an instance $F_1,F_2,...,F_n$ and nonincreasing thresholds 
$\infty = \tau_0 \ge \tau_1 \ge \cdots \ge \tau_n \ge \tau_{n+1} = -\infty$, it holds that, for $j \in [n+1]$ and $t \in [\tau_j, \tau_{j-1})$,
\begin{align*}
	\mathbb{P}(V_{\sigma_T} &> t) 
		= \mathbb{P}(T \leq j-1 ) 
		+ \sum\limits_{i \in [n]} \mathbb{P}(V_i > t) \left(
			\sum\limits_{k > j-1 }^{n} \frac{ \mathbb{P}(T \geq k | \sigma_{k} = i) }{ n }
			\right)\,,
\end{align*}
where $T$ is the stopping time given by $TTA_{\tau_1, \ldots, \tau_n}$.
\end{Lem}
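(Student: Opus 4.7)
The plan is a case split on whether the gambler stops early or late, using the monotonicity of the thresholds in a critical way.

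First, I would decompose $\{V_{\sigma_T}>t\}$ according to whether $T\le j-1$ or $T\ge j$. In the first case, the gambler stops at some time $k\le j-1$, so by definition of $T$ and monotonicity of the thresholds, $V_{\sigma_T}>\tau_T\ge \tau_{j-1}>t$. Hence the event $\{T\le j-1\}$ is entirely contained in $\{V_{\sigma_T}>t\}$, contributing the first summand $\mathbb{P}(T\le j-1)$.

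Next, for the contribution of $\{T\ge j\}$, I would write
\[
\mathbb{P}(V_{\sigma_T}>t,T\ge j) \;=\; \sum_{k=j}^{n}\sum_{i\in[n]}\mathbb{P}(\sigma_k=i)\,\mathbb{P}(V_i>t,\,T=k\mid \sigma_k=i),
\]
and then simplify each summand. The key observation here is that, given $\sigma_k=i$ and $V_i>t$, since $t\ge \tau_j\ge \tau_k$ the requirement $V_i>\tau_k$ that appears in $\{T=k\}$ is automatic; thus
\[
\{V_i>t,\,T=k\}\cap\{\sigma_k=i\} \;=\; \{V_i>t\}\cap\{V_{\sigma_\ell}\le \tau_\ell,\ \ell<k\}\cap\{\sigma_k=i\}.
\]
Conditional on $\sigma_k=i$, the variables $\{V_{\sigma_\ell}:\ell\ne k\}$ are a relabeling of $\{V_\ell:\ell\ne i\}$ and are independent of $V_i$. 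Hence
\[
\mathbb{P}(V_i>t,\,T=k\mid \sigma_k=i) \;=\; \mathbb{P}(V_i>t)\,\mathbb{P}(V_{\sigma_\ell}\le\tau_\ell,\ \ell<k\mid \sigma_k=i) \;=\; \mathbb{P}(V_i>t)\,\mathbb{P}(T\ge k\mid \sigma_k=i),
\]
using that the event $\{T\ge k\}$ coincides with $\{V_{\sigma_\ell}\le\tau_\ell\text{ for all }\ell<k\}$. Plugging in $\mathbb{P}(\sigma_k=i)=1/n$ and swapping the order of summation gives exactly the claimed identity.

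The only subtle point, which I would flag but not dwell on, is the conditional independence argument: one must be careful that conditioning on $\sigma_k=i$ does not correlate $V_i$ with the other $V_{\sigma_\ell}$'s, which is immediate because $\sigma$ is independent of $(V_1,\ldots,V_n)$. Otherwise the proof is routine bookkeeping, and the edge cases $j=1$ (empty first term) and $j=n+1$ (empty sum) follow automatically from the conventions $\tau_0=\infty$ and $\tau_{n+1}=-\infty$.
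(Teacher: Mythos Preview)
Your proof is correct and follows essentially the same route as the paper: the same split into $\{T\le j-1\}$ and $\{T\ge j\}$, the same replacement of $\{T=k\}$ by $\{T\ge k\}$ on the event $\{V_i>t,\sigma_k=i\}$ using $t\ge\tau_k$, and the same independence factorization. You are simply a bit more explicit than the paper about the conditional independence step.
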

\begin{proof}
Notice that, since thresholds are nonincreasing, 
	\[
	\mathbb{P}(V_{\sigma_T} > t) 
		= \mathbb{P}(T \leq j-1 ) + \mathbb{P}(V_{\sigma_T} > t, T \geq j ) 
	\]
and therefore, we must simply analyze the second term.
\begin{align*}
	\mathbb{P}(V_{\sigma_T} > t, T \geq j ) 
		&= \sum\limits_{i \in [n]} 
			\mathbb{P}(V_i > t, \sigma_{T} = i, T \geq j ) \\
		&= \sum\limits_{i \in [n]} \sum\limits_{ k=j }^{ n } 
				\mathbb{P}(V_i > t, \sigma_{k} = i, T = k) \\
		&= \sum\limits_{i \in [n]} \sum\limits_{ k = j }^{n} 
				\mathbb{P}(V_i > t, \sigma_{k} = i, T \geq k) \\
		&= \sum\limits_{i \in [n]} \mathbb{P}(V_i > t) \left(
				\sum\limits_{k = j }^{n} \mathbb{P}(\sigma_{k} = i, T \geq k) 
				\right) \\
		&= \sum\limits_{i \in [n]} \mathbb{P}(V_i > t) \left(
				\frac{1}{n} \sum\limits_{ k = j }^{n} \mathbb{P}(T \geq k | \sigma_{k} = i) 
				\right) \,.
\end{align*}
\hfill \qed
\end{proof}
To give a lower bound on the right-hand side term, we use the following simple inequality: 
\begin{Lem} \label{simpleLem}
For all $i,k \in [n]$, 
\begin{equation} \label{simpleineq}
\mathbb{P}( T \geq k | \sigma_{k} = i) \geq \mathbb{P}( T > k ).
\end{equation}
\end{Lem}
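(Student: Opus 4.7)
The plan is to prove \eqref{simpleineq} via an explicit coupling that rewrites the conditional probability on the left as the probability of a related event on the unconditioned sample space. Fixing $i$, and thinking of $\sigma$ as a uniform random permutation on $[n]$ independent of $\vec V$, I would define $\sigma'$ by swapping the entries of $\sigma$ at positions $k$ and $\sigma^{-1}(i)$, so that $\sigma'_k = i$ holds deterministically while $\sigma'$ agrees with $\sigma$ on positions outside $\{k, \sigma^{-1}(i)\}$.

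The first step is a distributional check: $\sigma'$ is uniform on $\{\pi \in S_n : \pi_k = i\}$, i.e., it has the law of $\sigma$ conditional on $\sigma_k = i$. A direct enumeration suffices: for each target $\pi$ with $\pi_k = i$ and each $m \in [n]$, there is exactly one preimage $\sigma$ under the swap (namely $\sigma_m = i$, $\sigma_k = \pi_m$, and $\sigma_j = \pi_j$ for $j \notin \{k, m\}$), so each target inherits mass $n \cdot 1/n! = 1/(n-1)!$, which is the uniform mass on the target set of size $(n-1)!$.

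The crucial second step uses the monotonicity of the thresholds to establish the pointwise implication $\{T(\sigma, \vec V) \geq k+1\} \subseteq \{T(\sigma', \vec V) \geq k\}$, where $T(\cdot, \vec V)$ denotes the stopping time produced by $TTA_{\tau_1, \ldots, \tau_n}$ for a given permutation. Let $m = \sigma^{-1}(i)$. If $m \geq k$ the permutations $\sigma$ and $\sigma'$ agree on positions $1, \ldots, k-1$ and the implication is immediate. If $m < k$, the only changed position below $k$ is $m$, where $\sigma'_m = \sigma_k$; the hypothesis $T(\sigma, \vec V) \geq k+1$ forces $V_{\sigma_k} \leq \tau_k$, and $\tau_k \leq \tau_m$ then yields $V_{\sigma'_m} \leq \tau_m$, so every threshold check up to index $k-1$ is still respected under $\sigma'$.

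Combining both steps and taking expectations gives $\mathbb{P}(T > k) \leq \mathbb{P}(T(\sigma', \vec V) \geq k) = \mathbb{P}(T \geq k \mid \sigma_k = i)$, which is exactly \eqref{simpleineq}. I do not anticipate a real obstacle here; the only item requiring care is the enumeration in step one, and the monotonicity of $(\tau_j)$ is precisely the structural input that makes the displaced value at position $m < k$ safely pass the (weaker) threshold $\tau_m$.
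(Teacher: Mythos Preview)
Your coupling argument is correct. The distributional check in step one is fine (each target permutation with $\pi_k=i$ has exactly $n$ preimages under the swap, one for each possible value of $m=\sigma^{-1}(i)$), and the pointwise containment in step two is exactly where the monotonicity $\tau_k\le\tau_m$ for $m<k$ is used.

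This is a genuinely different route from the paper. The paper does not prove \eqref{simpleineq} directly; it instead defers to the stronger Lemma~\ref{Inequality: Conditional stopping time distribution}, whose proof conditions on the position $l=\sigma^{-1}(i)$ and bounds each term $\mathbb{P}(T>k\mid\sigma_l=i)$ separately by $\mathbb{P}(V_i\le\tau_l)\,\mathbb{P}(T\ge k\mid\sigma_k=i)$ (for $l\le k$) or by $\mathbb{P}(T\ge k\mid\sigma_k=i)$ (for $l>k$), then sums. Your swap coupling compresses this case analysis into a single sample-path inclusion, which is cleaner and more transparent for the bare inequality \eqref{simpleineq}. On the other hand, the paper's decomposition retains the factor $\mathbb{P}(V_i\le\tau_l)$ in each summand, which is precisely what yields the sharper denominator $1-\tfrac{k}{n}+\tfrac{1}{n}\sum_{l\in[k]}\mathbb{P}(V_i\le\tau_l)$ used in Section~\ref{0669}; your coupling, as stated, does not obviously recover that refinement.
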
\begin{proof}
This is a particular case of Lemma \ref{Inequality: Conditional stopping time distribution}, that will be proved in the next section. 
\end{proof}
We can now minorize $\mathbb{P}(V_{\sigma_T} > t)$ by a quantity that depends only on the cumulative distribution of $T$ and on $\alpha$. 
\begin{proposition}
\label{blind}

	Let $\alpha : [0, 1] \to [0, 1]$ be nonincreasing, and let $T$ be the {\em deterministic blind strategy} stopping time. For every instance $F_1, \ldots, F_n$ and $t>0$,
	\[
	\mathbb{P}( V_{\sigma_T}  > t ) 
		\geq \min_{j \in [n+1]} \left\{ \frac{\mathbb{P}(T \leq j-1)}{1-\alpha(\frac{j}{n})}+\left( \frac{1}{n} \sum_{ k = j }^{n} \mathbb{P}(T > k ) \right)
			\right\} 
			\mathbb{P}( \max\limits_{ i \in [n] } \{ V_i \}  > t ) \,,
	\]
where $\alpha( \frac{n+1}{n} ) = 0$. 
	
\end{proposition}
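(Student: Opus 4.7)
The plan is to combine Lemma \ref{probab} and Lemma \ref{simpleLem} directly, and then to handle the two terms separately, each bounded by a multiple of $\mathbb{P}(\max_i V_i > t)$, where the bookkeeping in $j$ is orchestrated by the threshold definition $\mathbb{P}(\max_i V_i \leq \tau_j) = \alpha(j/n)$.

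For a fixed $t > 0$, let $j \in [n+1]$ be the unique index with $t \in [\tau_j, \tau_{j-1})$ (using the conventions $\tau_0 = \infty$, $\tau_{n+1} = -\infty$). Lemma \ref{probab} gives
\[
\mathbb{P}(V_{\sigma_T} > t) = \mathbb{P}(T \leq j-1) + \sum_{i \in [n]} \mathbb{P}(V_i > t) \cdot \frac{1}{n}\sum_{k=j}^{n} \mathbb{P}(T \geq k \mid \sigma_k = i),
\]
and substituting the inequality $\mathbb{P}(T \geq k \mid \sigma_k = i) \geq \mathbb{P}(T > k)$ from Lemma \ref{simpleLem} (note the right-hand side no longer depends on $i$) yields
\[
\mathbb{P}(V_{\sigma_T} > t) \geq \mathbb{P}(T \leq j-1) + \Bigl(\sum_{i \in [n]} \mathbb{P}(V_i > t)\Bigr) \cdot \frac{1}{n}\sum_{k=j}^{n} \mathbb{P}(T > k).
\]
The union bound $\sum_i \mathbb{P}(V_i > t) \geq \mathbb{P}(\max_i V_i > t)$ immediately takes care of the second term.

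The key remaining observation is that the first term, $\mathbb{P}(T \leq j-1)$, can also be rewritten as a multiple of $\mathbb{P}(\max_i V_i > t)$ using the definition of $\alpha$. Indeed, since $t \geq \tau_j$,
\[
\mathbb{P}(\max_i V_i > t) \leq \mathbb{P}(\max_i V_i > \tau_j) = 1 - \alpha\!\left(\tfrac{j}{n}\right),
\]
so (assuming $\alpha(j/n) < 1$, otherwise the first term already dominates trivially)
\[
\mathbb{P}(T \leq j-1) \geq \frac{\mathbb{P}(T \leq j-1)}{1-\alpha(j/n)} \cdot \mathbb{P}(\max_i V_i > t).
\]
Adding the two contributions gives the claimed inequality for this particular $j$, and since the quantity in braces depends only on $j$ while the minimum over $j \in [n+1]$ is no larger than its value at this specific index, the asserted lower bound follows.

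The argument is essentially bookkeeping on top of Lemmas \ref{probab} and \ref{simpleLem}; the only mildly clever step is the pairing of $\mathbb{P}(T \leq j-1)$ with $1 - \alpha(j/n)$ via the threshold condition, which is what produces a bound whose dependence on the instance is confined to the stopping-time distribution of $T$. The main obstacle in the broader program (carried out in subsequent sections) is then to lower-bound $\mathbb{P}(T > k)$ and $\mathbb{P}(T \leq j-1)$ in terms of $\alpha$ alone, so that the minimum over $j$ becomes a functional of $\alpha$ that can be optimized.
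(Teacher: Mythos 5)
Your proof is correct and follows exactly the paper's argument: decompose via Lemma \ref{probab}, apply the bound $\mathbb{P}(T \geq k \mid \sigma_k = i) \geq \mathbb{P}(T > k)$ and the union bound to the second term, and rescale the first term using $\mathbb{P}(\max_i V_i > t) \leq \mathbb{P}(\max_i V_i > \tau_j) = 1 - \alpha(j/n)$. No issues.
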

\begin{proof}
Note that for all $j \in [n+1]$ and $t \in [\tau_j, \tau_{j-1})$, $1-\alpha(j/n) = \PP( \max_{i \in [n]} \{ V_i \} > \tau_j) \ge \PP( \max_{i \in [n]} \{ V_i \} > t)$.
Plugging this inequality and inequalities $\sum\limits_{i \in [n]} \mathbb{P}(V_i > t) \geq \mathbb{P}(\max\limits_{ i \in [n] } \{ V_i \}  > t )$ and (\ref{simpleineq}) into Lemma \ref{probab} yield the result. 
\end{proof}
We now bound the cumulative distribution of $T$ in function of the $\alpha(\frac{j}{n})$, $j \in [n+1]$. 
Both the lower and upper bounds are sharp in the sense that they are achieved by different instances: the lower bound corresponds to the case where there is only one non-zero variable and the upper bound corresponds to the case where all distributions are equal.

\begin{Lem}
\label{Inequality: Stopping time distribution}
	Fix $\alpha_1, \ldots, \alpha_n \in [0,1]$. For every instance $F_1, \ldots, F_n$ consider $\tau_1, \ldots, \tau_n$ the sequence of thresholds such that
	\[
	\mathbb{P}( \max_{ i \in [n] } \{ V_i \} \leq \tau_i ) = \alpha_i .
	\]
Denote by $T$ the stopping time of $TTA_{\tau_1, \ldots, \tau_n}$. For all $k \in [n]$, we have
	\[	 
		\frac{1}{n}\sum\limits_{j \in [k]} 1 - \alpha_j
		\leq \mathbb{P}( T \leq k )
		\leq 1 - \left( \prod_{j=1}^{k} \alpha_j \right)^{ \frac{1}{n} } \,.
	\]
\end{Lem}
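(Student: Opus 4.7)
The plan is to prove the upper and lower bound by independent arguments, both exploiting that the random permutation $\sigma$ is independent of $V_1,\ldots,V_n$ and that the $V_i$ are independent and continuous.

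\textbf{Upper bound.} First I would condition on $\sigma$ and write
\[
\mathbb{P}(T > k) \;=\; \mathbb{E}_\sigma\Big[ \prod_{j=1}^k F_{\sigma_j}(\tau_j) \Big],
\]
using independence of the $V_i$. Then, applying Jensen's inequality to the concave function $\log$ and using that each marginal $\sigma_j$ is uniform on $[n]$ yields
\[
\mathbb{P}(T > k) \;\geq\; \exp\Big( \tfrac{1}{n}\sum_{j=1}^k \sum_{i=1}^n \log F_i(\tau_j) \Big) \;=\; \Big( \prod_{j=1}^k \alpha_j \Big)^{1/n},
\]
where the last equality uses the defining relation $\alpha_j = \prod_i F_i(\tau_j)$. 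Taking complements gives the claimed upper bound.

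\textbf{Lower bound.} The idea is to exhibit, for each $j\leq k$, a sub-event of $\{T\leq k\}$ of probability exactly $(1-\alpha_j)/n$, and which are pairwise disjoint in $j$. Let $I^{*} = \arg\max_i V_i$, which is a.s.\ well-defined by continuity, and let $J = \sigma^{-1}(I^{*})$ be the (random) position at which the maximum is revealed. Set
\[
E_j \;:=\; \{J = j\}\,\cap\,\{V_{I^{*}} > \tau_j\}.
\]
The $E_j$ are pairwise disjoint since $J$ takes a single value, and on $E_j$ one has $V_{\sigma_j} = V_{I^{*}} > \tau_j$, so the algorithm must stop no later than step $j$; in particular $E_j \subseteq \{T\leq k\}$ whenever $j\leq k$. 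Since $\sigma$ is independent of $(V_1,\ldots,V_n)$ and hence of $I^{*}$, the conditional law of $J$ given $I^{*}$ is uniform on $[n]$, so
\[
\mathbb{P}(E_j) \;=\; \tfrac{1}{n}\sum_{i=1}^n \mathbb{P}(I^{*}=i,\,V_i>\tau_j) \;=\; \tfrac{1}{n}\,\mathbb{P}(\max_i V_i > \tau_j) \;=\; \tfrac{1-\alpha_j}{n},
\]
and summing over $j\leq k$ gives the lower bound.

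The main conceptual obstacle is the lower bound: a naive union bound on $\{V_{\sigma_j} > \tau_j\}$ points the wrong way, and an inclusion–exclusion argument leaves cross terms that depend delicately on the $F_i$. Conditioning on the argmax and its position sidesteps both issues by reducing everything to the one-dimensional distribution of $\max_i V_i$, and this decomposition is naturally motivated by the claim in the excerpt that the lower bound is tight on the instance with a single non-trivial variable, where the $E_j$ exactly cover $\{T\leq k\}$. Correspondingly, Jensen's inequality is tight on the i.i.d.\ instance, matching the other extremal case mentioned there.
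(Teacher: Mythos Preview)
Your proof is correct, and both halves take a genuinely different route from the paper.

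For the upper bound, the paper establishes a pairwise inequality
\[
\mathbb{P}\big(T>k;\,F_1,F_2,\ldots\big)\ \ge\ \mathbb{P}\big(T>k;\,\sqrt{F_1F_2},\sqrt{F_1F_2},\ldots\big)
\]
via the elementary bounds $F_1(\tau_p)+F_2(\tau_p)\ge 2\sqrt{F_1(\tau_p)F_2(\tau_p)}$ and $F_1(\tau_p)F_2(\tau_q)+F_2(\tau_p)F_1(\tau_q)\ge 2\sqrt{F_1(\tau_p)F_2(\tau_p)F_1(\tau_q)F_2(\tau_q)}$, and then iterates this replacement infinitely often to reach the i.i.d.\ instance. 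Your argument instead goes through Jensen applied to $\log$ on the expectation over $\sigma$, using only that each coordinate $\sigma_j$ is marginally uniform. This is shorter and avoids the case split on whether one, both, or neither of the two distinguished indices appear among $\sigma_1,\ldots,\sigma_k$; the paper's pairwise replacement, on the other hand, makes the ``weak log-Schur-convexity'' structure explicit and shows a monotone path to the i.i.d.\ extremum.

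For the lower bound, the paper proves the companion pairwise inequality
\[
\mathbb{P}\big(T>k;\,F_1,F_2,\ldots\big)\ \le\ \mathbb{P}\big(T>k;\,F_1F_2,\mathds{1}_{\mathbb{R}_+},\ldots\big)
\]
and iterates to the single-nontrivial-variable instance, where the value $\frac{1}{n}\sum_{j\le k}(1-\alpha_j)$ is computed directly. Your argmax decomposition bypasses this entirely: the events $E_j=\{\sigma^{-1}(I^*)=j,\ V_{I^*}>\tau_j\}$ are disjoint, each has probability exactly $(1-\alpha_j)/n$ by independence of $\sigma$ from the $V_i$, and each forces $T\le j$. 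This is a cleaner route to the inequality; the paper's approach has the complementary virtue of identifying the extremal instance constructively rather than only witnessing it through tightness.
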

\begin{proof}
	The proof consists in highlighting the role of $F_1$ and $F_2$ in $\mathbb{P}( T > k )$ and using the symmetry that the random order $\sigma$ induces. The key idea is to distinguish between the following cases: 
\begin{enumerate}
	\item $\sigma^{-1}(1) \leq k \, \veebar \, \sigma^{-1}(2) \leq k $, i.e. : only one of the variables $V_1$ and $V_2$ shows before time $k$.
	\item $\sigma^{-1}(1) \leq k \, \land \, \sigma^{-1}(2) \leq k $, i.e. : both $V_1$ and $V_2$ show before time $k$.
	\item $\sigma^{-1}(1) > k \, \land \, \sigma^{-1}(2) > k $, i.e. : neither $V_1$ nor $V_2$ shows before time $k$.
\end{enumerate}

	To express this formally, define
\begin{align*}
	\Sigma(k) &:= \{ \sigma, \text{ ordered subset of $[n]$ with size $k$} \} \\
	\Sigma_{-1, -2}(k) &:= \{ \sigma, \text{ ordered subset of $[n]\setminus\{1, 2\}$
		with size $k$} \} \,.
\end{align*} 
For $\sigma \in \Sigma(k)$, we have that either
\begin{enumerate}
	
	\item $\exists p \in [k], \exists i \in \{ 1, 2 \} \quad$ s.t. $\sigma_p = i$ and
		\[ 
		(\sigma_j)_{j \in [k]\setminus\{p\}} \in \Sigma_{-1, -2}(k-1) \,.
		\]
		
	\item $\exists p < q \in [k] \quad$ s.t. $\{ \sigma_p, \sigma_q \} = \{ 1, 2 \}$ and
		\[
		(\sigma_j)_{j \in [k]\setminus\{p, q\}} \in \Sigma_{-1, -2}(k-2) \,.
		\]

	\item $\sigma \in \Sigma_{-1, -2}(k)$.
\end{enumerate}
This is the key decomposition we use to show the inequality. In what follows, we write 
$\mathbb{P}( T > k ; F_1 , \ldots, F_n )$ for the probability that $T$ is strictly larger than $k$, given that the instance is $F_1 , \ldots, F_n$. 
We have
\begin{align*}
	\mathbb{P}( T > k ; F_1 , \ldots, F_n ) 
		&= \frac{1}{|\Sigma(k)|} \sum\limits_{\sigma \in \Sigma(k)} 
				\prod\limits_{i \in [k]} F_{\sigma_i}( \tau_i )\\
		&= \frac{(n-k)!}{n!} \left(
			\sum\limits_{\sigma \in \Sigma_{-1, -2}(k)} \right.
				\prod\limits_{i \in [k]} F_{\sigma_i}( \tau_i ) 
		+ \sum\limits_{ \substack{ \sigma \in \Sigma_{-1, -2}(k-1) \\ p \in [k] } } 
				\prod\limits_{i = 1}^{p-1} F_{\sigma_i}( \tau_i )
				\left[ F_{1}( \tau_p ) + F_{2}( \tau_p ) \right] 
		 \prod\limits_{i = p}^{k-1} F_{\sigma_i}( \tau_{i+1} ) \\
		&+ \sum\limits_{ \substack{ \sigma \in \Sigma_{-1, -2}(k-2) \\ p < q \in [k] } } 
				\prod\limits_{i = 1}^{p-1} F_{\sigma_i}( \tau_i ) 
			\left[ F_{1}( \tau_p )F_{2}( \tau_q ) + F_{2}( \tau_p )F_{1}( \tau_q ) \right]  
			\left. \prod\limits_{i = p}^{q-1} F_{\sigma_i}( \tau_{i+1} )
				\prod\limits_{i = q}^{k-2} F_{\sigma_i}( \tau_{i+2} ) \right) \,.
\end{align*}

	To simplify the notation, let us define
\begin{align*}
	&A(F_1, F_2):= 
		\sum_{ \substack{ \sigma \in \Sigma_{-1, -2}(k-1) \\ p \in [k] } }
				\left[ F_{1}( \tau_p ) + F_{2}( \tau_p ) \right]				
				\prod\limits_{i \in [k-1]} F_{\sigma_i}( \tau_{ i + \mathds{1}_{i \geq p} } )
				, \\ \\
	&B(F_1, F_2):= 
		\sum_{ \substack{ \sigma \in \Sigma_{-1, -2}(k-2) \\ p < q \in [k] } }
				\left[ F_{1}( \tau_p )F_{2}( \tau_q ) + F_{2}( \tau_p )F_{1}( \tau_q ) \right]
				\cdot \prod\limits_{i \in [k-2]} 
					F_{\sigma_i}( \tau_{ i + \mathds{1}_{i \geq p} + \mathds{1}_{i \geq q} } ), \\ \\
	&C:= \sum\limits_{\sigma \in \Sigma_{-1, -2}(k)} 
				\prod\limits_{i \in [k]} F_{\sigma_i}( \tau_i ) \,.
\end{align*}
Then, 
\begin{align*}
	\mathbb{P}&( T > k ; F_1 , \ldots, F_n ) 
		= \frac{(n-k)!}{n!} \left[ A( F_1, F_2 ) + B( F_1, F_2 ) + C \right] \,.
\end{align*}

	Let us show that both $A$ and $B$ change in the correct direction when we change $F_1$ and $F_2$, by $F_1 F_2$ and $\mathds{1}_{\mathbb{R}_+}$, or $\sqrt{F_1F_2}$ and $\sqrt{F_1F_2}$, respectively. For this, note that for all $p \in [k]$
\begin{align*}
	1 + F_1(\tau_p)F_2(\tau_p) 
		\geq F_1(\tau_p) + F_2(\tau_p) 
		&\geq 2 \sqrt{ F_1(\tau_p) F_2(\tau_p) } \,,
\end{align*}
and for all $p < q \in [k]$
\begin{align*}
	F_{1}( \tau_p )F_{2}( \tau_p ) + F_{2}( \tau_q )F_{1}( \tau_q ) 
		&\geq F_{1}( \tau_p )F_{2}( \tau_q ) + F_{2}( \tau_p )F_{1}( \tau_q ) \\
		&\geq 2 \sqrt{ F_{1}( \tau_p )F_{2}( \tau_p ) F_{1}( \tau_q )F_{2}( \tau_q ) } \,.
\end{align*}
Then, 
	\[
	A( F_1F_2, \mathds{1}_{\mathbb{R}_+} )
		\geq A( F_1, F_2 )
		\geq A( \sqrt{F_1 F_2}, \sqrt{F_1 F_2} ) \,,
	\]
and
	\[
	B( F_1F_2, \mathds{1}_{\mathbb{R}_+} )
		\geq B( F_1, F_2 ) 
		\geq B( \sqrt{F_1 F_2}, \sqrt{F_1 F_2} )\,.
	\]

	We can conclude on the lower bound by applying the inequality $n$ times and noticing that
	\[
	\mathbb{P} \left(
			T \leq k ; \prod_{ i \in [n] }F_i, \mathds{1}_{\mathbb{R}_+}, \ldots, \mathds{1}_{\mathbb{R}_+} 
			\right) 
		= \frac{1}{n}\sum\limits_{j \in [k]} 1 - \alpha_j \,.
	\]
	
	The upper bound follows from applying the inequality infinitely many times and noticing that
	\[
	\mathbb{P} \left(
			T \leq k ; \prod_{ i \in [n] }F_i^{ \frac{1}{n} }, \ldots, \prod_{ i \in [n] }F_i^{ \frac{1}{n} } 
			\right) 
		= 1 - \prod_{j \in [k]} \alpha_j^{ \frac{1}{n} } \,.
	\]
\hfill \qed
\end{proof}

Remember that in the proof of Lemma \ref{Inequality: How likely is to pick you?} we solved the following optimization problem: $\min\left\{\phi(y);\, s.t. \sum_{j \in [n-1]} y_j = \beta \right\}$. The value of this problem was obtained by noticing that $\phi$ is Schur convex. This time we considered the problem 
	\[
	\left\{ \begin{array}{l l}
		\text{opt } \mathbb{P}( T > k ; F_1, \ldots, F_n ) \\
		s.t. \prod_{ i \in [n] } F_i = F \text{ and $F_i$ is a distribution}.
	\end{array}
	\right.
	\]
where ``opt'' is a symbol in $\{ \min, \max \}$. This problem is harder since it involves optimizing over functions rather than real numbers. Trying to apply Schur convexity theory again, one could see $\mathbb{P}( T > k ; F_1, \ldots, F_n )$ as a function of the distributions evaluated at each threshold, that is as a function of the vector $(F_1(\tau_1), \ldots, F_1(\tau_n), \ldots, F_n(\tau_1), \ldots, F_n(\tau_n))$. Unfortunately this domain is not symmetric and moreover the constraint of the product being constant results in $n$ different constraints. 

However, note that the previous lemma shows that $\mathbb{P}( T > k ; F_1, \ldots, F_n )$ is nearly log-Schur-convex: it increases when the components of the argument get more concentrated in some coordinate. Nevertheless, the behavior of $\mathbb{P}( T > k )$ is not always monotone along the curve $\lambda \in [0, 1] \mapsto (F_1 F_2^{\lambda}, F_2^{1-\lambda} ,\ldots,F_n) $, a property that would be satisfied by a log-Schur-convex function if $F_1, \ldots, F_n$ were numbers. In spite of the latter, there is a step by step way to go from $(F_1, F_2,\ldots,F_n)$ to $(F_1 F_2 \ldots F_n, \mathds{1}_{\mathbb{R}_+}, \ldots, \mathds{1}_{\mathbb{R}_+})$ that exhibits a monotonic behavior, while maintaining the product. This property could be called {\em weak} log-Schur-convexity and is enough to solve the optimization problem. The same can be said about the points $(F_1, F_2,\ldots,F_n)$ and $( \sqrt[n]{ F_1 F_2 \ldots F_n}, \ldots, \sqrt[n]{ F_1 F_2 \ldots F_n} )$. 
\\

We can now give a lower bound on $\mathbb{P}(V_{\sigma_T} > t)$ that only depends on $\alpha$.  

\begin{theorem}
\label{blind}

	Let $\alpha : [0, 1] \to [0, 1]$ be nonincreasing, and let $T$ be the {\em deterministic blind strategy} stopping time. For every instance $F_1, \ldots, F_n$ and $t>0$,
	\[
	\mathbb{P}( V_{\sigma_T}  > t ) 
		\geq \min_{j \in [n+1]} \left\{ f_j( \alpha ) \right\} 
			\mathbb{P}( \max\limits_{ i \in [n] } \{ V_i \}  > t ) \,,
	\]
where, for all $j \in [n+1]$, taking $\alpha( \frac{n+1}{n} ) = 0$,
	\[
	f_j( \alpha ) =   \sum\limits_{k = 1}^{j-1} 
						\frac{ 1 - \alpha \left( \frac{ k }{ n } \right) }
							{ n \left( 1 - \alpha \left( \frac{ j }{ n } \right) \right) }
		+ \frac{1}{n} \sum\limits_{ k = j }^{ n } 
			\left( \prod\limits_{l=1}^{k} \alpha \left( \frac{ l }{ n } \right) \right) ^ { \frac{1}{n} } \,.
	\]
\end{theorem}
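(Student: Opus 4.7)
The theorem is essentially a clean combination of two results already in place: Proposition \ref{blind} (which lower-bounds $\mathbb{P}(V_{\sigma_T} > t)$ in terms of the distribution of $T$) and Lemma \ref{Inequality: Stopping time distribution} (which controls the distribution of $T$ purely in terms of the values $\alpha(k/n)$, removing dependence on the instance). So my plan is simply to feed the second into the first.

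My first step is to apply Proposition \ref{blind} to obtain, for every instance and every $t > 0$,
\[
\mathbb{P}(V_{\sigma_T} > t) \;\geq\; \min_{j \in [n+1]} \left\{ \frac{\mathbb{P}(T \leq j-1)}{1 - \alpha(j/n)} \;+\; \frac{1}{n}\sum_{k=j}^n \mathbb{P}(T > k) \right\} \mathbb{P}\!\left(\max_{i \in [n]} V_i > t\right).
\]
Next I invoke Lemma \ref{Inequality: Stopping time distribution} with $\alpha_k = \alpha(k/n)$, which gives the two bounds I need in opposite directions: the lower bound $\mathbb{P}(T \leq j-1) \geq \tfrac{1}{n}\sum_{k=1}^{j-1}(1 - \alpha(k/n))$, and the upper bound $\mathbb{P}(T \leq k) \leq 1 - \prod_{l=1}^k \alpha(l/n)^{1/n}$, equivalently $\mathbb{P}(T > k) \geq \prod_{l=1}^k \alpha(l/n)^{1/n}$. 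Substituting both estimates term by term into the bracketed quantity (this is valid because each term in the bracket is monotone in the appropriate direction, and $1 - \alpha(j/n) > 0$ when the $j$-th term is nondegenerate) gives, for each fixed $j$,
\[
\frac{\mathbb{P}(T \leq j-1)}{1 - \alpha(j/n)} + \frac{1}{n}\sum_{k=j}^n \mathbb{P}(T > k) \;\geq\; \sum_{k=1}^{j-1} \frac{1 - \alpha(k/n)}{n(1 - \alpha(j/n))} + \frac{1}{n}\sum_{k=j}^n \left(\prod_{l=1}^k \alpha(l/n)\right)^{1/n} = f_j(\alpha).
\]

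Finally, since the inequality above holds for every $j$, taking the minimum over $j \in [n+1]$ commutes with the substitution and yields $\mathbb{P}(V_{\sigma_T} > t) \geq \min_{j} f_j(\alpha) \cdot \mathbb{P}(\max_i V_i > t)$, which is the desired statement. I should handle the boundary indices: for $j = 1$ the first sum is empty, and for $j = n+1$ the convention $\alpha((n+1)/n) = 0$ makes the denominator $1$ and the second sum empty, so both expressions degenerate correctly.

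There is no real obstacle to overcome: all the work has already been carried out in Proposition \ref{blind} (applying the trivial conditional-stopping inequality of Lemma \ref{simpleLem}) and in Lemma \ref{Inequality: Stopping time distribution} (the weak log-Schur-convexity argument). The only point that deserves a line of justification is that the two uses of Lemma \ref{Inequality: Stopping time distribution} go in opposite directions — one instance-minimizing configuration for $\mathbb{P}(T \leq j-1)$ and a different, instance-maximizing one for $\mathbb{P}(T \leq k)$ — but since each bound is applied separately to a different, monotone-in-the-right-way term of the sum, the combined bound is valid.
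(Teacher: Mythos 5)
Your proposal is correct and is exactly the paper's own proof: the paper states that the theorem ``is a direct consequence of Proposition \ref{blind} and Lemma \ref{Inequality: Stopping time distribution},'' and your write-up simply makes explicit the substitution of the lemma's lower bound on $\mathbb{P}(T\leq j-1)$ and lower bound on $\mathbb{P}(T>k)$ into the two terms of the proposition. Nothing further is needed.
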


\begin{proof}
This is a direct consequence of Proposition \ref{blind} and Lemma \ref{Inequality: Stopping time distribution}. 
%
\hfill \qed
\end{proof}

Thus, for every $n$, we get a lower bound on the performance of a deterministic blind strategy $\alpha$, that only depends on $\alpha( \frac{1}{n} ), \ldots, \alpha( \frac{n}{n} )$. As we explained before, we only care about the performance of this strategy when $n$ tends to $+\infty$. Assume that $\alpha$ is continuous. A standard Riemann sum analysis shows that
\begin{align}
\label{inf} \lim_{n \to \infty} &\min_{j \in [n+1]} \{ f_j( \alpha ) \} 
		= \min \left\{ \int_{ 0 }^{ 1 } 1 - \alpha( y ) dy  \, , \inf_{ x \in [0, 1]} 
			\int_{ 0 }^{ x } \frac{ 1 - \alpha( y )}{ 1 - \alpha( x ) } dy
			+ \int_{ x }^{ 1 } e ^ { \int_{ 0 }^{ y } \ln \alpha(w) dw } dy \right\} \,.
\end{align}

Thus, in order to prove Theorem \ref{Blind strategy factor}, we would like to find a blind strategy $\alpha$ maximizing the latter expression. As this is a nontrivial optimal control problem we aim at finding a function $\alpha$ such that the above expression is larger than $0.665$. 

\textbf{Remark.} Consider $\alpha$ being constant equal to $1/e$. Then the above quantity is equal to $1-1/e$. Thus, we recover the one-threshold result in Corollary \ref{onethreshold}. Furthermore, if for instance we take $\alpha( x ) = 0.53-0.38x$ the guarantee of the strategy (given by expression \eqref{inf}) is greater than $0.657$. This gives an explicit $\alpha$ that beats significantly $1-1/e$.

To maximize over expression \eqref{inf}, we resort to a numerical approximation. Note that if $\alpha$ is such that $\alpha(1)=0$ and $x \mapsto \int\limits_{ 0 }^{ x } \frac{ 1 - \alpha( y )}{ 1 - \alpha( x ) } dy + \int\limits_{ x }^{ 1 } e ^ { \int\limits_{ 0 }^{ y } \ln \alpha(w) dw } dy$ is a constant, then this constant is a lower bound for the infimum in \eqref{inf}. 

Consequently, we solve the following integro-differential equation:
	\[
	\left\{ \begin{array}{l r}
	\frac{d}{dx} \left( 
		\int\limits_{ 0 }^{ x } \frac{ 1 - \alpha( y )}{ 1 - \alpha( x ) } dy
			+ \int\limits_{ x }^{ 1 } e ^ { \int\limits_{ 0 }^{ y } \ln \alpha(w) dw } dy
		\right) 
			= 0
			& ; x \in (0, 1)\\ 
	\alpha(1) = 0.
	\end{array} \right. 
	\]	
To this end we consider a change of variables leading to the following second order ODE:
	\[
	\left\{ \begin{array}{l r}
	(u'(x))^2 K(x, u)
		- u''(x) u(x) = 0
			& ; x \in (0, 1)\\ 
	u'(1) = 1 \\
	u(0) = 0,
	\end{array} \right. 
	\]
where
\begin{align*}
	u(x) &:= \int\limits_{0}^{ x } 1 - \alpha(x) dx \\
	K(x, u) &:= 1 - \exp \left( \int\limits_{0}^{x} \ln( 1 - u'(t) ) dt \right) \,.
\end{align*}
We approximately solved this equation by taking an initial guess $u_0$ and defining $u_{n+1}$ as the solution to $(u'(x))^2 K(x, u_n) - u''(x) u(x) = 0$. To be more precise, the initial guess $u_0$ was the result of maximizing over $\alpha$ $\min_{j \in [n+1]} \{ f_j(\alpha) \}$, given in Theorem \ref{blind}, for $n=23$. Then, we iterated the process eleven times and obtained an $\alpha$ with $\alpha(1) = 0$ and such that the function $x \mapsto \int\limits_{ 0 }^{ x } \frac{ 1 - \alpha( y )}{ 1 - \alpha( x ) } dy + \int\limits_{ x }^{ 1 } \exp\left({ \int\limits_{ 0 }^{ y } \ln \alpha(w) dw}\right) dy$ varies between $0.6653$ and $0.6720$. Even if we did not find an exact solution for the ODE, its performance is given by computing \eqref{inf}. This gives the claimed factor of $0.665$.

\section{Improved analysis and proof of Theorem \ref{mainthm}}
\label{0669}
\label{blind detailed}
In this section, we present how to get the factor 0.669 in Theorem \ref{mainthm}. The general method is the same, and the key ingredient is to use a sharper inequality than (\ref{simpleineq}), that is written below. 
\begin{Lem}
\label{Inequality: Conditional stopping time distribution}
	Given $V_1, \ldots, V_n$ independent random variables and $\tau_1 \geq \ldots \geq \tau_n$ a sequence of nonincreasing thresholds, we denote by $T$ the stopping time of $TTA_{\tau_1, \ldots, \tau_n}$. Then, $\forall i, k \in [n]$
	\[
	\mathbb{P}( T \geq k | \sigma_{k} = i)
		\geq \frac{ \mathbb{P}( T > k ) }{
			1 - \frac{k}{n} + 
			\frac{1}{n} \sum\limits_{l \in [k]}
				\mathbb{P}( V_{i} \leq \tau_l ) 
			} \,.
	\]
\end{Lem}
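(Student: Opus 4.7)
The plan is to express both sides as explicit sums over ordered subsets of $[n]$ and then perform a term by term comparison, exploiting the nonincreasing nature of the thresholds. Using the notation $\Sigma(m)$ for the set of ordered $m$-subsets of $[n]$ and $\Sigma_{-i}(m)$ for the set of ordered $m$-subsets of $[n]\setminus\{i\}$, conditioning on $\sigma_k = i$ leaves the first $k-1$ positions occupied by a uniform ordered $(k-1)$-subset of $[n]\setminus\{i\}$, so
\[
\mathbb{P}(T \geq k \mid \sigma_k = i) = \frac{(n-k)!}{(n-1)!}\, N_{k,i}, \qquad N_{k,i} := \sum_{\sigma \in \Sigma_{-i}(k-1)} \prod_{l=1}^{k-1} F_{\sigma_l}(\tau_l).
\]
Similarly, $\mathbb{P}(T > k) = \frac{(n-k)!}{n!}\, M_k$ where $M_k := \sum_{\sigma \in \Sigma(k)} \prod_{l=1}^{k} F_{\sigma_l}(\tau_l)$. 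After clearing common factors, the claimed inequality is equivalent to
\[
M_k \;\leq\; N_{k,i} \cdot \Bigl(n - k + \sum_{l=1}^{k} F_i(\tau_l)\Bigr).
\]

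The next step is to decompose $M_k$ according to whether $i$ belongs to the ordered $k$-subset and, if so, at which position $p$ it sits. This gives
\[
M_k \;=\; A_{k,i} \;+\; \sum_{p=1}^{k} F_i(\tau_p)\, B_{k,i,p},
\]
where $A_{k,i} := \sum_{\sigma \in \Sigma_{-i}(k)} \prod_{l=1}^{k} F_{\sigma_l}(\tau_l)$ collects the terms where $i$ is absent, and $B_{k,i,p} := \sum_{\sigma \in \Sigma_{-i}(k-1)} \prod_{l=1}^{p-1} F_{\sigma_l}(\tau_l) \prod_{l=p}^{k-1} F_{\sigma_l}(\tau_{l+1})$ collects the contribution of the remaining $k-1$ indices when $i$ occupies slot $p$.

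Two elementary inequalities then finish the argument. For $A_{k,i}$, factoring out the length-$(k-1)$ prefix yields
\[
A_{k,i} = \sum_{\sigma' \in \Sigma_{-i}(k-1)} \Bigl(\prod_{l=1}^{k-1} F_{\sigma'_l}(\tau_l)\Bigr) \sum_{j \notin \{i,\sigma'_1,\ldots,\sigma'_{k-1}\}} F_j(\tau_k) \;\leq\; (n-k)\, N_{k,i},
\]
because the inner sum has exactly $n-k$ terms, each bounded by $1$. For $B_{k,i,p}$, the hypothesis $\tau_{l+1} \leq \tau_l$ gives $F_{\sigma_l}(\tau_{l+1}) \leq F_{\sigma_l}(\tau_l)$ term by term, hence $B_{k,i,p} \leq N_{k,i}$. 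Plugging both bounds into the decomposition of $M_k$ produces the required inequality.

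There is no serious obstacle beyond careful bookkeeping: the delicate points are getting the combinatorial prefactors $(n-k)!/n!$ versus $(n-k)!/(n-1)!$ right, and making sure that in the extension argument for $A_{k,i}$ one properly removes both $i$ and the $k-1$ prefix indices from $[n]$ before counting, which is precisely what yields the exact coefficient $n - k$ matching the $1 - k/n$ term in the statement.
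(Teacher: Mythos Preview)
Your proof is correct and follows essentially the same approach as the paper. The paper phrases the argument probabilistically---conditioning $\mathbb{P}(T>k)$ on the position $\sigma^{-1}(i)$ of variable $i$ and bounding each conditional term by $\mathbb{P}(T\ge k\mid\sigma_k=i)$ (times $F_i(\tau_l)$ when $i$ sits at slot $l\le k$)---whereas you unpack the same decomposition into explicit sums over ordered subsets; your inequalities $A_{k,i}\le(n-k)N_{k,i}$ and $B_{k,i,p}\le N_{k,i}$ are exactly the paper's two bounds after clearing the common combinatorial factors.
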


\begin{proof}
	Inspired by the proof given by Esfandiari et al. \cite{EHL15}, fix $i, k \in [n]$ and define
\begin{align*}
	\Sigma_{-i}(k) &:= \{ \sigma, \text{ ordered subset of } [n]\setminus \{ i \} \text{  with size $k$} \} \,.
\end{align*}
Then, conditioning on the value of $\sigma^{-1}(i)$,
\begin{equation}
\label{distribution}
	\mathbb{P}( T > k )
			= \frac{1}{n} \sum\limits_{l \in [k]} \mathbb{P}( T > k | \sigma_l = i ) 
	+ \frac{1}{n} \sum\limits_{l = k+1}^{n} \mathbb{P}( T > k| \sigma_l = i ) \,.
\end{equation}
Note that for all $l = k+1, \ldots, n$, 
\begin{align*}
	\mathbb{P}( T > k| \sigma_l = i ) 
		= \mathbb{P}( T > k| \sigma_{k+1} = i ) 
		&\leq \mathbb{P}( T \geq k| \sigma_k = i )
\end{align*}
and given $l \in [k]$, since the thresholds are nonincreasing,
\begin{align*}
	\mathbb{P}( T > k | \sigma_l = i ) 
		&= \frac{ \mathbb{P}( V_i \leq \tau_l) }{|\Sigma_{-i}(k-1)|} \sum\limits_{ \sigma \in \Sigma_{-i}(k-1) }
				\prod_{ j \in [k-1] } F_{\sigma_j}( \tau_{j + \mathds{1}_{ j \geq l}} ) \\
		&\leq  \frac{ \mathbb{P}( V_i \leq \tau_l) }{|\Sigma_{-i}(k-1)|} \sum\limits_{ \sigma \in \Sigma_{-i}(k-1) }
				\prod_{ j \in [k-1] } F_{\sigma_j}( \tau_{j} ) \\
		&= \mathbb{P}( V_i \leq \tau_l)
				\mathbb{P}( T \geq k | \sigma_{k} = i ) \,.		
\end{align*}

Plugging both inequalities back into equation \eqref{distribution} we get the result.

\hfill \qed
\end{proof}
Recall that in the proof of Proposition \ref{propsimple}, we used the inequality $\sum_{i \in [n]} \mathbb{P}(V_i>t) \geq \mathbb{P}( \max\limits_{ i \in [n] } \{ V_i \} > t) $. We will use now the following more sophisticated inequality: 
\begin{Lem}
\label{Inequality: Concave or convex?}
	Let $V_1, ..., V_n$ independent random variables and $\tau_1 \geq \max \{ \tau_2, \tau_3, ..., \tau_n \}$ a sequence of thresholds. Then, for any $ t < \tau_1$ and $k \leq \frac{n}{2}$,
	\[
	\sum\limits_{ i \in [n] } 
		\frac{ \mathbb{P}( V_i > t ) }{ 1 - \frac{1}{n} \sum\limits_{ l \in [k]} \mathbb{P}( V_i > \tau_l) }
			\geq \frac{ \mathbb{P}( \max\limits_{ i \in [n] } \{ V_i \} > t ) }
					{ 1 - \frac{k}{n} \mathbb{P}( \max\limits_{ i \in [n] } \{ V_i \} > \tau_1) } \,.
	\]
\end{Lem}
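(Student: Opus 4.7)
The plan is to reduce the lemma to a purely algebraic two-variable inequality that can be proved by induction on $n$. Since $\tau_l \leq \tau_1$ for all $l \in [k]$, we have $1 - F_i(\tau_l) \geq 1 - F_i(\tau_1)$, so the denominator of each LHS term satisfies $1 - \frac{1}{n}\sum_{l\in[k]}\mathbb{P}(V_i>\tau_l) \leq \beta + \alpha F_i(\tau_1)$, where $\alpha := k/n$ and $\beta := 1 - \alpha$. Writing $x_i := F_i(t)$ and $y_i := F_i(\tau_1)$ (so $x_i \leq y_i$ since $t < \tau_1$), and noting that $1 - \frac{k}{n}\mathbb{P}(\max_i V_i > \tau_1) = \beta + \alpha \prod_i y_i$ and $\mathbb{P}(\max_i V_i > t) = 1 - \prod_i x_i$, it suffices to establish
\[
\sum_{i \in [n]} \frac{1 - x_i}{\beta + \alpha y_i} \;\geq\; \frac{1 - \prod_i x_i}{\beta + \alpha \prod_i y_i}
\]
whenever $0 \leq x_i \leq y_i \leq 1$ and $\alpha \leq 1/2$.

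I would prove this by induction on $n$, with $n = 2$ as the key base case. The inductive step is immediate: apply the hypothesis to the $n-1$ pairs $(x_i, y_i)_{i \geq 2}$ to bound $\sum_{i \geq 2}$ by a single fraction with parameters $(\prod_{i \geq 2} x_i, \prod_{i \geq 2} y_i)$ (still satisfying the ordering $\prod x_i \le \prod y_i$), then apply the $n = 2$ case to $(x_1, y_1)$ and this aggregated pair.

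For the $n = 2$ base case, the substitution $u_i = 1 - y_i$, $v_i = 1 - x_i$ (so $0 \leq u_i \leq v_i \leq 1$) rewrites the inequality as
\[
\frac{v_1}{1-\alpha u_1} + \frac{v_2}{1-\alpha u_2} \;\geq\; \frac{v_1 + v_2 - v_1 v_2}{1 - \alpha(u_1 + u_2 - u_1 u_2)}.
\]
After clearing denominators and using the identity $(1-\alpha u_1)(1-\alpha u_2) = (1 - \alpha u') - \alpha\beta u_1 u_2$ with $u' := u_1 + u_2 - u_1 u_2$, the inequality reduces to
\[
(1 - \alpha u') \bigl[v_1 v_2 - \alpha(v_1 u_2 + v_2 u_1)\bigr] + \alpha\beta\, u_1 u_2 (v_1 + v_2 - v_1 v_2) \;\geq\; 0.
\]
The bound $u_i \leq v_i$ yields $v_1 u_2 + v_2 u_1 \leq 2 v_1 v_2$, so the bracket is at least $(1 - 2\alpha) v_1 v_2 \geq 0$ thanks to $\alpha \leq 1/2$; the remaining term is manifestly nonnegative.

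The main obstacle is identifying the right two-variable inequality after the initial simplification. Both hypotheses of the lemma---$t < \tau_1$ and $k \leq n/2$---materialize in the base case as $u_i \leq v_i$ and $\alpha \leq 1/2$, and both are used crucially in the final sign estimate; removing either hypothesis breaks the bound.
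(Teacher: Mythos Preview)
Your argument is correct and follows the same high-level strategy as the paper: bound the denominators using $\tau_l\le\tau_1$, reduce to the purely algebraic inequality
\[
\sum_{i}\frac{1-x_i}{1-\alpha+\alpha y_i}\ \ge\ \frac{1-\prod_i x_i}{1-\alpha+\alpha\prod_i y_i}
\qquad(0\le x_i\le y_i\le 1,\ \alpha\le\tfrac12),
\]
and then prove this by showing the two-term (pairwise) case and iterating. The paper states the same pairwise inequality (their display~(\ref{F inequality})) and also iterates it $n$ times.

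Where your proof differs is in the treatment of the pairwise case. The paper recasts it as a constrained optimization problem in the variables $x=F_1(\tau_1)$ and $y=F_1(t)$ with the products $F_1(\tau_1)F_2(\tau_1)$ and $F_1(t)F_2(t)$ held fixed, observes concavity in $y$ to push to a boundary, and then verifies three sufficient conditions for $x=1$ to be the minimizer of the resulting one-variable problem. Your route is more elementary: after the substitution $u_i=1-y_i$, $v_i=1-x_i$ you clear denominators, use the identity $(1-\alpha u_1)(1-\alpha u_2)=(1-\alpha u')-\alpha\beta u_1u_2$, and reduce everything to the single sign estimate $v_1v_2-\alpha(v_1u_2+v_2u_1)\ge(1-2\alpha)v_1v_2\ge 0$, which uses exactly $u_i\le v_i$ (i.e.\ $t<\tau_1$) and $\alpha\le\tfrac12$ (i.e.\ $k\le n/2$). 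This is shorter and makes the role of the two hypotheses completely transparent; the paper's optimization argument, on the other hand, explains \emph{why} the worst pair is the degenerate one $(F_1,F_2)\mapsto(F_1F_2,\mathds{1})$, in keeping with the Schur-convexity viewpoint used elsewhere in the paper.
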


\begin{proof}
	Define $\lambda := \frac{k}{n} \in [0, \frac{1}{2}]$ and notice that
\begin{align*}
	\sum\limits_{ i \in [n] } 
		\frac{ \mathbb{P}( V_i > t ) }{ 1 - \frac{1}{n} \sum\limits_{ l \in [k]} \mathbb{P}( V_i > \tau_l) } 
		&= \sum\limits_{ i \in [n] } 
			\frac{ \mathbb{P}( V_i > t ) }
				{ \frac{n - k}{n} + \frac{1}{n} \sum\limits_{ l \in [k]} \mathbb{P}( V_i \leq \tau_l) } \\
		&\geq \sum\limits_{ i \in [n] } 
			\frac{ \mathbb{P}( V_i > t ) }
				{ 1 - \lambda + \lambda \mathbb{P}( V_i \leq \tau_1) } \\
		&= \sum\limits_{ i \in [n] } 
			\frac{ 1 - F_i( t ) }
				{ 1 - \lambda + \lambda F_i( \tau_1) } \\
		&=: C( t ; \lambda, F_1, \ldots, F_n ) \,.		
\end{align*}

	Therefore, it is sufficient to prove that
\begin{equation}
\label{F inequality}
	\frac{ 1 - F_1( t ) }{ 1 - \lambda + \lambda F_1( \tau_1) } 
	+ \frac{ 1 - F_2( t ) }{ 1 - \lambda + \lambda F_2( \tau_1) } 
	\geq \frac{ 1 - F_1( t ) F_2( t ) }{ 1 - \lambda + \lambda F_1( \tau_1) F_2( \tau_1) } \,,
\end{equation}
since we can iterate this argument $n$ times to deduce the result. To prove this inequality, define the following variables
\begin{align*}
	&\beta	:= F_1( \tau_1 ) F_2( \tau_1 ) , 
	&&\gamma 	:= F_1( t ) F_2( t ) , \\
	&x		:= F_1( \tau_1 ) , 
	&&y 		:= F_1( t ) \,.
\end{align*}
We can now simply solve the following optimization problem:
	\[
	(P) \left\{ \begin{array}{l c}
	\min_{x, y} & f_{ \lambda }( x, y ) := \frac{ 1 - y }{ 1 - \lambda + \lambda x } 
							+ \frac{ 1 - \frac{ \gamma}{y} }{ 1 - \lambda + \lambda \frac{ \beta }{x} } \\
	s.t. 		& \beta	\leq x \leq 1\\
				& \beta \leq \frac{ \beta }{ x } \leq 1 \\
				& \gamma \leq y \leq x \\
				& \gamma \leq \frac{ \gamma }{ y } \leq \frac{ \beta }{ x } \,.
	\end{array} 
	\right.
	\]
		
	Notice that the function $f_{\lambda}$ defined by
\begin{align*}
	f_{\lambda}(x, y) = \frac{ 1 }{ 1 - \lambda + \lambda x } 
			- \left( \frac{ 1 }{ 1 - \lambda + \lambda x } \right) y 
			+ \frac{ 1 }{ 1 - \lambda + \lambda \frac{ \beta }{x} }
			- \left( \frac{ \gamma }{ 1 - \lambda + \lambda \frac{ \beta }{x} } \right) \frac{1}{y} \,,
\end{align*}
is concave in $y$. Rearranging the inequalities, the problem reduces to
	\[
	(P) \left\{ \begin{array}{l c}
	\min_{x} & f_{ \lambda }( x ) := \frac{ 1 - x }{ 1 - \lambda + \lambda x } 
							+ \frac{ 1 - \frac{ \gamma}{x} }{ 1 - \lambda + \lambda \frac{ \beta }{x} } \\
	s.t. 		& \beta	\leq x \leq 1
	\end{array} \right. \,
	\]

	Then, all we have to show is that, if $\lambda \in [0, \frac{1}{2}]$, $x = 1$ is the minimum, since this would imply inequality \eqref{F inequality}. The following are three sufficient conditions for $x=1$ to be the minimum:
\begin{enumerate}
	\item $f_{\lambda}( \beta ) \geq f_{\lambda}( 1 )$.
	\item $x = 1$ is local minimum. 
	\item There exists at most one critical point in the interval $[\beta, 1]$.
\end{enumerate}

	All these conditions are true for $\lambda \in [0, 1/2]$, so we can prove inequality \eqref{F inequality}. Therefore, the lemma is proved iterating this inequality $n$ times.
\hfill \qed
\end{proof}

To prove Theorem \ref{Blind strategy factor} (i.e., to improve upon the bound proved in the last section) we turn to a particular type of blind strategies where $\alpha = \alpha_{\alpha_1, \ldots, \alpha_m}$ is given by
	\[
	\alpha_{\alpha_1, \ldots, \alpha_m}( x )
		= \sum\limits_{ j \in [m] } \alpha_j \mathds{1}_{ \left[ \frac{j-1}{m}, \frac{j}{m} \right) }( x ) \,,
	\]
in other words, piece-wise constant functions.
	
The idea now is to use the property that, for any instance of size larger than $m$, this blind strategy uses only $m$ thresholds. 
	
	Define for $m \geq 1$
	\[
	g_{ m , p }( k ) = \left\{ \begin{array}{l l}
		\frac{ 1 }{ 1 - \frac{k}{m} ( 1 - p ) }
			&; k \leq \frac{m}{2} \\
		\frac{ 2 }{ 1 + p }
			&; k > \frac{m}{2} \,.
		\end{array}	 
		\right.
	\]
This function is used in the next lemma. Notice that it is a nondecreasing function in $k$ that is always greater than $1$. We can prove the following statement, which has the same flavor as Theorem \ref{blind}.

\begin{Lem}
\label{Performance guarantee for piece-wise constant blind strategies}
	Let $ \alpha = \alpha_{\alpha_1, \ldots, \alpha_m}$ be a nonincreasing function where $\alpha_m > 0$, and let $T$ be the blind strategy stopping time. Then, 
	\[
	\frac{ \mathbb{E}( V_{\sigma_T} ) }{ \mathbb{E}( \max\limits_{ i \in [n] } \{ V_i \} ) }
		\geq \min_{ j \in [m+1] } \{ f_j( \alpha_1, \ldots, \alpha_m ) \} \,,
	\]
where $ $ is given by
	\[
	f_j( \alpha_1, \ldots, \alpha_m ) = \left\{ \begin{array}{l l}
		\sum\limits_{ k = 1 }^{ m } 
			\left( \prod\limits_{ l \in [k-1] } \alpha_l \right)^{ \frac{1}{m} } 
			\left( \frac{ 1 - \alpha_k^{ \frac{1}{m} } }{ -\ln \alpha_k } \right)
			&; j = 1 \\
		\sum\limits_{ k \in [j-1] } \frac{ 1 - \alpha_k }{ m( 1 - \alpha_j ) } 
		+ \sum\limits_{ k = j }^{ m } 
		\left( \prod\limits_{ l \in [k-1] } \alpha_l \right)^{ \frac{1}{m} } 
		g_{ m, \alpha_1 }( k - 1 ) 
		\left( \frac{ 1 - \alpha_k^{ \frac{1}{m} } }{ -\ln \alpha_k } \right)
			&; j \in \{ 2, \ldots, m \} \\
		\frac{1}{m} \sum\limits_{ k \in [m] } 1 - \alpha_k
			&; j = m+1 \,,
		\end{array}
		\right.
	\]

\end{Lem}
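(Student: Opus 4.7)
The plan is to follow the same strategy as the proof of Theorem~\ref{blind} (which establishes Proposition~\ref{propsimple}), but replace its two crude steps -- Lemma~\ref{simpleLem} and the union bound on $\sum_i \mathbb{P}(V_i > t)$ -- by the sharper inequalities developed in this section: Lemma~\ref{Inequality: Conditional stopping time distribution} and Lemma~\ref{Inequality: Concave or convex?}. The extra factor $g_{m,\alpha_1}(k-1)$ in the statement will arise precisely from the latter.

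First, by the observation that a blind strategy is the $N \to \infty$ limit of the deterministic blind strategy applied to the padded instance, it suffices to analyze the deterministic blind strategy for the piece-wise constant $\alpha$ on an instance of size $N = m n'$ and then let $n' \to \infty$. The corresponding thresholds are constant on blocks of size $N/m$: block $k$ carries the threshold $\tau^{(k)}$ defined by $\mathbb{P}(\max_i V_i \le \tau^{(k)}) = \alpha_k$. I will show that for $t \in [\tau^{(j)}, \tau^{(j-1)})$ (with $\tau^{(0)} = +\infty$ and $\tau^{(m+1)} = -\infty$), the ratio $\mathbb{P}(V_{\sigma_T} > t)/\mathbb{P}(\max_i V_i > t)$ is, in the limit, at least $f_j(\alpha_1,\ldots,\alpha_m)$; integrating over $t$ then yields the claim.

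Applying Lemma~\ref{probab} at the block boundary $j' = (j-1)N/m$ decomposes $\mathbb{P}(V_{\sigma_T} > t)$ as $\mathbb{P}(T \le (j-1)N/m) + \sum_i \mathbb{P}(V_i > t)\sum_{k_0 > (j-1)N/m}\mathbb{P}(T \ge k_0 \mid \sigma_{k_0}=i)/N$. The first term is controlled by the lower bound of Lemma~\ref{Inequality: Stopping time distribution}: since $\mathbb{P}(\max_i V_i > t) \le 1 - \alpha_j$, dividing gives the linear prefix $\frac{1}{m(1-\alpha_j)}\sum_{k<j}(1-\alpha_k)$ of $f_j$. For the second term, Lemma~\ref{Inequality: Conditional stopping time distribution} replaces $\mathbb{P}(T \ge k_0 \mid \sigma_{k_0}=i)$ by $\mathbb{P}(T > k_0)/[1 - (1/N)\sum_{l\in[k_0]}\mathbb{P}(V_i > \tau_l)]$, using the identity $1 - k_0/N + (1/N)\sum_l \mathbb{P}(V_i\le \tau_l) = 1 - (1/N)\sum_l \mathbb{P}(V_i>\tau_l)$. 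When $j \ge 2$ one has $t < \tau^{(1)}$, so Lemma~\ref{Inequality: Concave or convex?} lets us replace $\sum_i \mathbb{P}(V_i>t)/[\,\cdot\,]$ by $\mathbb{P}(\max_i V_i > t)/[1 - (k_0/N)(1-\alpha_1)]$ whenever $k_0 \le N/2$; for $k_0 > N/2$, monotonicity in $k_0$ of that lemma's left-hand side permits reuse of the $k_0 = N/2$ bound $2/(1+\alpha_1)$. Within each block $k$ the resulting factor is smallest at the start of the block, which in the limit $N \to \infty$ equals $g_{m,\alpha_1}(k-1)$. Combined with the lower bound on $\mathbb{P}(T > k_0)$ from Lemma~\ref{Inequality: Stopping time distribution} and a block-wise Riemann-sum computation, one gets $\sum_{k_0 \text{ in block }k}\mathbb{P}(T>k_0)/N \to (\prod_{l<k}\alpha_l)^{1/m}(1-\alpha_k^{1/m})/(-\ln\alpha_k)$. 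Summing over $k = j, \ldots, m$ produces the remaining part of $f_j$ for $j \in \{2,\ldots,m\}$. For $j = 1$ one has $t \ge \tau^{(1)}$, so Lemma~\ref{Inequality: Concave or convex?} is inapplicable; the union bound $\sum_i \mathbb{P}(V_i > t) \ge \mathbb{P}(\max_i V_i > t)$ is used instead, which corresponds to $g_{m,\alpha_1} \equiv 1$ and produces exactly $f_1$. For $j = m+1$ only the first term survives, and the bound $\mathbb{P}(\max_i V_i > t) \le 1$ gives $f_{m+1}$.

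The main obstacle will be bookkeeping: matching the denominators produced by Lemma~\ref{Inequality: Conditional stopping time distribution} with the input form required by Lemma~\ref{Inequality: Concave or convex?}, carefully splitting at $k_0 = N/2$ to invoke both regimes of $g_{m,\alpha_1}$, and verifying the block-by-block limits that yield the factor $(1-\alpha_k^{1/m})/(-\ln\alpha_k)$.
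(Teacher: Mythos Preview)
Your proposal is correct and follows essentially the same route as the paper's proof: reduce to the deterministic blind strategy on an instance of size $Nm$, decompose $\mathbb{P}(V_{\sigma_T}>t)$ at the block boundary via Lemma~\ref{probab}, handle the first term with the lower bound of Lemma~\ref{Inequality: Stopping time distribution}, handle the second term by combining Lemma~\ref{Inequality: Conditional stopping time distribution} with Lemma~\ref{Inequality: Concave or convex?} (splitting at $k_0=Nm/2$ exactly as you describe to produce $g_{m,\alpha_1}$), and use the upper bound of Lemma~\ref{Inequality: Stopping time distribution} for $\mathbb{P}(T>k_0)$ to obtain the block-wise geometric sums that converge to $(\prod_{l<k}\alpha_l)^{1/m}(1-\alpha_k^{1/m})/(-\ln\alpha_k)$. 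The treatment of the boundary cases $j=1$ (union bound in place of Lemma~\ref{Inequality: Concave or convex?}) and $j=m+1$ also matches the paper.
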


\begin{proof}
	As done in section \ref{beat}, we analyze the performance of the corresponding deterministic blind strategy with an instance of size $n$ and we only care about the performance guarantee of $\alpha$ as $n$ grows to $\infty$. For $N \geq 1$, consider an instance $F_1, \ldots, F_{Nm}$, and take $j \in [m+1]$ and $t \in [ \tau_{j}, \tau_{j-1} )$, where $\tau_0 = \infty$, $\tau_{m+1} = 0$ and $\alpha_{m+1} = 0$. In the same spirit as in Lemma \ref{probab}, we have that
\begin{align*}
	\mathbb{P}( V_{\sigma_T} > t ) 
		&= \mathbb{P}( T \leq ( j - 1 )N ) 
			+ \mathbb{P}( V_{\sigma_T} > t, T > ( j - 1 )N ) \\
		&= \mathbb{P}( T \leq ( j - 1 )N ) 
			+  \sum\limits_{ i \in [n] } \mathbb{P}( V_i > t )
				\left( \sum\limits_{ k = ( j - 1 )N + 1 }^{ Nm }
				\frac{ \mathbb{P}( T \geq k | \sigma_k = i ) }{ Nm }
				\right) \,.
\end{align*}

One key point is that since the same thresholds are used $N$ times, we can deduce better bounds. For the first term, as before, we have that

\begin{align*}
	\mathbb{P}( T \leq ( j - 1 )N ) 
		\geq \frac{ \mathbb{P}( T \leq ( j - 1 )N ) }{ 1 - \alpha_j }
				\mathbb{P}( \max\limits_{ i \in [n]} \{ V_i \} > t ) \,.
\end{align*}

Then, for $j = m+1$ (i.e. : $t \in [0, \tau_m)$), using Lemma \ref{Inequality: Stopping time distribution}, we have that
\begin{align*}
	\mathbb{P}( V_{\sigma_T} > t )
		\geq \frac{ 1 }{ m } \sum\limits_{ k \in [m] } 1 - \alpha_k \,,
\end{align*}
which concludes the case $j = m+1$, since $\alpha_{m+1} = 0$.
	
Now, for $j \in \{ 2, \ldots, m \}$, we must show the lower bound for
\begin{align*}
	&\mathbb{P}( T \leq ( j - 1 )N ) 
		+  \sum\limits_{ i \in [n] } \mathbb{P}( V_i > t )
			\sum\limits_{ k = ( j - 1 )N + 1 }^{ Nm }
			\frac{ \mathbb{P}( T \geq k | \sigma_k = i ) }{ Nm } \,.
\end{align*}
For the first term we use again Lemma \ref{Inequality: Stopping time distribution} and deduce that 
\begin{align*}
	\mathbb{P}( T \leq ( j - 1 )N ) 
		&\geq \sum_{ k \in [( j - 1 )N] } \frac{ 1 - \alpha( k/Nm ) }{ Nm } \\
		&= \sum_{ k \in [ j-1 ] } \frac{ 1 - \alpha_k }{ m } \\
		&\geq \sum_{ k \in [ j-1 ] } \frac{ 1 - \alpha_k }{ m ( 1 - \alpha_j ) } 
			\mathbb{P}( \max_{ i \in [n] } \{ V_i \} > t) \,.
\end{align*}
Noticing that $\alpha$ is nonincreasing, the corresponding thresholds are nonincreasing and we can use both Lemma \ref{Inequality: Conditional stopping time distribution} and Lemma \ref{Inequality: Concave or convex?} in the following way. First, for every $i, k \in [Nm]$,
	\[
	\mathbb{P}( T \geq k | \sigma_k = i ) 
		\geq \frac{ \mathbb{P}( T > k) }{ 1 - \frac{1}{Nm} \sum_{ l \in [k] } \mathbb{P}( V_i > \tau_l ) } \,.
	\]
Then, interchanging the order of the sums,
\begin{align*}
	\sum\limits_{ i \in [n] } \mathbb{P}( V_i > t )
		\sum\limits_{ k = ( j - 1 )N + 1 }^{ Nm }
		\frac{ \mathbb{P}( T \geq k | \sigma_k = i ) }{ Nm } 
			\geq \sum\limits_{ k = ( j - 1 )N + 1 }^{ Nm } \mathbb{P}( T > k )
				\sum\limits_{ i \in [n] } 
					\frac{ \mathbb{P}( V_i > t )  }{ Nm - \sum_{ l \in [k] } \mathbb{P}( V_i > \tau_l ) } \,.
\end{align*}
Now, by Lemma \ref{Inequality: Concave or convex?}, for $ k \leq Nm/2$,
\begin{align*}
	\sum\limits_{ i \in [n] } 
		\frac{ \mathbb{P}( V_i > t )  }{ Nm - \sum_{ l \in [k] } \mathbb{P}( V_i > \tau_l ) } 
			&\geq \frac{ \mathbb{P}( \max_{ i \in [n] } \{ V_i \} > t )  }
				{ Nm - k \mathbb{P}( \max_{ i \in [n] } \{ V_i \} > \tau_1 ) } \\
			&= \frac{ g_{ Nm, \alpha_1 }( k ) }{Nm} \mathbb{P}( \max_{ i \in [n] } \{ V_i \} > t ) \,,
\end{align*}
and for $k > Nm/2$, 
\begin{align*}
	\sum\limits_{ i \in [n] } 
		\frac{ \mathbb{P}( V_i > t )  }{ Nm - \sum_{ l \in [k] } \mathbb{P}( V_i > \tau_l ) } 
			&\geq \frac{ \mathbb{P}( \max_{ i \in [n] } \{ V_i \} > t )  }
				{ Nm - Nm/2 \mathbb{P}( \max_{ i \in [n] } \{ V_i \} > \tau_1 ) } \\
			&= \frac{ g_{ Nm, \alpha_1 }( k ) }{Nm} \mathbb{P}( \max_{ i \in [n] } \{ V_i \} > t ) \,,
\end{align*}
which is an improvement over using the union bound (as in the previous section). All in all, we have proven the following bound.
\begin{align*}
	 \sum\limits_{ i \in [n] } \mathbb{P}( V_i > t )
		\sum\limits_{ k > ( j - 1 )N }^{ Nm }
		\frac{ \mathbb{P}( T \geq k | \sigma_k = i ) }{ Nm } 
			\geq \left[ \sum\limits_{ k = ( j - 1 )N + 1 }^{ Nm }
				\mathbb{P}( T > k ) \frac{ g_{ Nm, \alpha_1 }( k ) }{Nm} \right]
				\mathbb{P}( \max\limits_{ i \in [n]} \{ V_i \} > t ) \,.			
\end{align*}
Moreover,
\begin{align*}
	\sum\limits_{ k = ( j - 1 )N + 1 }^{ Nm } \mathbb{P}( T > k ) \frac{g_{ Nm, \alpha_1 }( k )}{Nm}
		&= \sum_{ l = j }^{ m } \sum_{ k = 1 }^{ N }
			\mathbb{P}( T > (l-1)N + k ) \frac{g_{ Nm, \alpha_1 }( (l-1)N + k )}{Nm}  \\
		&\geq \sum_{ l = j }^{ m } 
			\left( \prod_{ l' = 1 }^{ l-1 } \alpha_{l'} \right)^{ \frac{1}{m} }
			\sum\limits_{ k = 1 }^{ N } \left( \alpha_l^{ \frac{1}{Nm} } \right)^{ k }
			\frac{g_{ Nm, \alpha_1 }( (l-1)N )}{Nm} \\
		&= \sum\limits_{ l = j }^{ m } 
			\left( \prod\limits_{ l' = 1 }^{ l-1 } \alpha_{l'} \right)^{ \frac{1}{m} }
			g_{ m, \alpha_1 }( l-1 )  
			\frac{ \alpha_l^{ \frac{1}{Nm} } }{ Nm } 
				\frac{ 1 - \alpha_l^{ \frac{ 1 }{ m } } }{ 1 - \alpha_l^{ \frac{1}{Nm} } } \\
		&\xrightarrow[ N \to \infty ]{}
			\sum\limits_{ k = j }^{ m } 
				\left( \prod\limits_{ l \in [k-1] } \alpha_l \right)^{ \frac{1}{m} } 
				g_{ m, \alpha_1 }( k-1 ) 
				\left( \frac{ 1 - \alpha_k^{ \frac{1}{m} } }{ -\ln \alpha_k } \right) \,.
\end{align*}
Putting these two inequalities together, we can conclude the case $j \in \{2, \ldots, m\}$.
	
Lastly, for $j = 1$ (i.e. : $t \in [ \tau_1 , \infty )$) we do as before, in Section \ref{beat}, and use Lemma \ref{Inequality: Conditional stopping time distribution}, Lemma \ref{Inequality: Stopping time distribution} and the union bound to derive the following
\begin{align*}
	\mathbb{P}( V_{\sigma_T} > t ) 
		&= \frac{1}{n} \sum\limits_{ i \in [n] } \mathbb{P}( V_i > t )
				\sum\limits_{ k = 1 }^{ Nm }
				\mathbb{P}( T \geq k | \sigma_k = i ) \\
		&\geq \left[ \frac{1}{n} \sum\limits_{ k = 1 }^{ Nm }
				\mathbb{P}( T > k )
			\right] \mathbb{P}( \max\limits_{ i \in [n]} \{ V_i \} > t ) \\
		&\geq \left[ \sum\limits_{ k = 1 }^{ m } 
				\left( \prod\limits_{ l \in [k-1] } \alpha_l \right)^{ \frac{1}{m} } 
				\left( \frac{ 1 - \alpha_k^{ \frac{1}{m} } }{ -\ln \alpha_k } \right)
			\right] \mathbb{P}( \max\limits_{ i \in [n]} \{ V_i \} > t ) \,,
\end{align*}
where the last inequality is only valid in the limit as $N \to \infty$.
\hfill \qed
\end{proof}

With the previous lemma we can easily establish the improved guarantee. Indeed, take the right-hand-side of the expression in Lemma \ref{Performance guarantee for piece-wise constant blind strategies} and optimize over the choice of $\alpha_1, \ldots, \alpha_m$. We do this optimization numerically and find a particular collection $\alpha_1, \ldots, \alpha_m$ such that the guarantee evaluates to $0.669$, as stated in the following corollary. We must note however that there might be other choices leading to slightly improved guarantees.
\begin{Corollary}
\label{Performance guarantee for piece-wise constant blind strategies, numeric}
	There exists $1 \geq \alpha_1 \geq \ldots \geq \alpha_m \geq 0$ such that
	\[
	\frac{ \mathbb{E}( V_{\sigma_T} ) }{ \mathbb{E}( \max\limits_{ i \in [n] } \{ V_i \} ) }
		\geq 0.66975 \,,
	\]
where $T$ is the stopping time corresponding to the blind strategy $\alpha = \alpha_{\alpha_1, \ldots, \alpha_m}$.
\end{Corollary}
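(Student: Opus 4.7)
The plan is to deduce this corollary as an essentially computational consequence of the preceding Lemma~\ref{Performance guarantee for piece-wise constant blind strategies}. That lemma hands us, for every piece-wise constant nonincreasing profile $(\alpha_1,\ldots,\alpha_m)$ with $\alpha_m>0$, the explicit lower bound $\min_{j\in[m+1]} f_j(\alpha_1,\ldots,\alpha_m)$ on the ratio $\mathbb{E}(V_{\sigma_T})/\mathbb{E}(\max_i V_i)$. So the corollary reduces to exhibiting concrete values of $m$ and $\alpha_1\ge\cdots\ge\alpha_m>0$ for which this minimum is at least $0.66975$. Since the bound is a finite, elementary expression in the $\alpha_k$'s, this is a finite-dimensional problem that can be attacked by direct numerical optimization.

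The concrete steps I would carry out are the following. First, fix a discretization level $m$ (moderately large, say $m$ in the range $20$--$50$, which should suffice given that the ODE analysis in Section~\ref{beat} already produced a guarantee close to $0.665$ with $n=23$). Second, set up the optimization
\[
\max_{1\geq \alpha_1\ge\cdots\ge\alpha_m>0}\;\min_{j\in[m+1]} f_j(\alpha_1,\ldots,\alpha_m),
\]
where $f_j$ is given by the three-case formula of Lemma~\ref{Performance guarantee for piece-wise constant blind strategies}. This is a nonsmooth but low-dimensional max-min problem; I would feed it to a standard solver, using the $\alpha$ produced by the ODE procedure at the end of Section~\ref{beat} as initial guess (after sampling it at the grid $\{k/m:k\in[m]\}$ and projecting onto the monotone cone). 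Third, read off the optimizer $(\alpha_1^*,\ldots,\alpha_m^*)$ and the corresponding value $v^*$ of the outer maximum.

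Fourth, for verification I would substitute the explicit numerical values $\alpha_k^*$ back into each of the closed-form expressions $f_j$, for $j=1$, $j\in\{2,\ldots,m\}$, and $j=m+1$, and evaluate them by straightforward summation. Each $f_j$ is a sum of at most $m$ terms involving only products of the $\alpha_k^*$, the function $g_{m,\alpha_1^*}$, and logs, so evaluation is numerically stable and certifiable. If $\min_j f_j\ge 0.66975$, then Lemma~\ref{Performance guarantee for piece-wise constant blind strategies} applied to $\alpha = \alpha_{\alpha_1^*,\ldots,\alpha_m^*}$ gives the claim directly, because that lemma already takes care of the limit $n\to\infty$ through the reduction from deterministic blind strategies explained in Section~\ref{beat}.

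The main obstacle is not conceptual but computational: the objective $\min_j f_j$ is only piecewise smooth (the active constraint changes as the profile varies), and the expressions for $f_j$ in the middle range $j\in\{2,\ldots,m\}$ are noticeably more complicated than those for $j=1$ or $j=m+1$, so rounding errors must be controlled if one wants to certify the digit $0.66975$ rigorously. I would therefore perform the final evaluation in rational or interval arithmetic, so that each $f_j(\alpha_1^*,\ldots,\alpha_m^*)$ is bounded rigorously from below; this yields a certified lower bound on the minimum, and hence, via Lemma~\ref{Performance guarantee for piece-wise constant blind strategies}, the stated ratio $0.66975$. Note that the corollary explicitly allows ``there might be other choices leading to slightly improved guarantees,'' so we do not need to prove optimality of our $\alpha^*$; existence of a witness achieving the bound is enough.
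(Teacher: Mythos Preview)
Your proposal is correct and follows essentially the same approach as the paper: the paper also proves the corollary by numerically optimizing the lower bound $\min_{j\in[m+1]} f_j(\alpha_1,\ldots,\alpha_m)$ from Lemma~\ref{Performance guarantee for piece-wise constant blind strategies} over nonincreasing tuples, reporting that $m=30$ suffices to reach $0.66975$. Your additional remarks about using the ODE solution from Section~\ref{beat} as an initial guess and certifying the final evaluation via interval arithmetic are sensible refinements but not conceptually different.
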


In particular, taking $m = 30$ was enough to derive this result. 

section{A $0.675$ upper bound for blind strategies: proof of Theorem \ref{blindupper}}

In order to prove Theorem \ref{blindupper}, we consider two instances and show that no blind strategy can guarantee better than $0.675$ for both instances.

The first instance consists simply in a single random variable which is nearly deterministic, given by $V_1\sim U(1-\varepsilon,1+\varepsilon)$. 
The second instance has $n$ i.i.d. random variables defined by (and we take $n\to \infty$):
	\[
	V_i \sim \left\{ \begin{array}{l l}
		1/\varepsilon 
			&\text{w.p.} \quad \varepsilon \\
		U( 0, \varepsilon ) 
			&\text{w.p.} \quad 1 - \varepsilon \,.
		\end{array}
		\right.
	\]
Combining these two instances one can show the following result.
\begin{Lem}Let $T$ be the stopping time corresponding to the blind strategy given by $\alpha$. Then
\label{Upper bound for blind strategies}
\begin{align*}
	\sup_{ \alpha } \inf_{ n; F_1, ..., F_n } 
	\frac{ \mathbb{E}( V_{\sigma_T} ) }{ \mathbb{E}( \max\limits_{ i \in [n] } \{ V_i \} ) } 
		\leq \sup_{ \alpha } \min \left\{
			1 - \int\limits_{ 0 }^{ 1 } \alpha( s ) ds \quad , \quad
			\int\limits_{ 0 }^{ 1 } e^{ \int\limits_{ 0 }^{ s } \ln \alpha( w ) dw } ds
			\right\} \,.
\end{align*}
\end{Lem}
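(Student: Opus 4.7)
The plan is to exhibit the two instances proposed in the statement, compute (in a suitable limit) the performance ratio of an arbitrary blind strategy $\alpha$ on each, and then observe that $\inf_{n;F_1,\ldots,F_n}\mathbb{E}(V_{\sigma_T})/\mathbb{E}(\max_i V_i)$ is bounded above by the minimum of the two ratios, whence the inequality on $\sup_\alpha\inf$ follows.

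For the first instance, $V_1\sim U(1-\varepsilon,1+\varepsilon)$ concentrates at $1$ as $\varepsilon\to 0$. The blind strategy draws one $u_1\sim U(0,1)$ and sets $\tau_1=1-\varepsilon+2\varepsilon\alpha(u_1)$, so $\mathbb{P}(V_1>\tau_1\mid u_1)=1-\alpha(u_1)$ while $\mathbb{E}(V_1\mid V_1>\tau_1)\to 1$. Integrating over $u_1$ gives $\mathbb{E}(V_{\sigma_T})\to\int_0^1(1-\alpha(s))\,ds$, and since $\mathbb{E}(\max)=1$ the ratio is $1-\int_0^1\alpha(s)\,ds$.

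For the second instance, the key idea is to pick the scaling $n\to\infty$, $\varepsilon\to 0$ with $n\varepsilon\to 0$ (take, e.g., $\varepsilon=1/n^2$). For any $u\in(0,1)$ with $\alpha(u)\in(0,1)$, $1-\alpha(u)^{1/n}\sim -\ln\alpha(u)/n\gg\varepsilon$, so each threshold $\tau_i$ defined by $F(\tau_i)^n=\alpha(u_{[i]})$ lies strictly inside the continuous part $[0,\varepsilon]$ of the support: namely $\tau_i=\alpha(u_{[i]})^{1/n}\varepsilon/(1-\varepsilon)$, with no randomization needed. Writing $G(s):=\int_0^s\ln\alpha(w)\,dw$, a Riemann-sum calculation gives $\mathbb{P}(T\geq i)=\prod_{j<i}\alpha(u_{[j]})^{1/n}\longrightarrow e^{G((i-1)/n)}$. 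Conditional on stopping at $i$, the gambler collects either $1/\varepsilon$ (with conditional probability $\varepsilon/(1-\alpha(u_{[i]})^{1/n})$) or a value of order $\varepsilon$; a direct computation then yields $\mathbb{E}(V_{\sigma_i}\mathbf{1}_{T=i})=\mathbb{P}(T\geq i)(1+O(\varepsilon))$, so that
\[
\mathbb{E}(V_{\sigma_T})\;\sim\;\sum_{i=1}^n\mathbb{P}(T\geq i)\;\sim\; n\int_0^1 e^{G(s)}\,ds,
\]
while $\mathbb{E}(\max_iV_i)=(1/\varepsilon)(1-(1-\varepsilon)^n)+o(1)\sim n$ because $1-(1-\varepsilon)^n\sim n\varepsilon$ in this regime. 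Dividing yields the ratio $\int_0^1 e^{\int_0^s\ln\alpha(w)\,dw}\,ds$.

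The main obstacle is identifying and justifying the scaling $n\varepsilon\to 0$ for the second instance: under the naive regime $n\varepsilon\to\infty$ the thresholds would be pushed onto the atom at $1/\varepsilon$, each acceptance would return the full $1/\varepsilon$, and the computation would only produce the weaker bound $1-e^{G(1)}$. The regime $n\varepsilon\to 0$ instead keeps the thresholds in the continuous component, makes the per-stopping payoff of order $1$ rather than $1/\varepsilon$, and produces precisely $\int_0^1 e^{G(s)}\,ds$. Combining both instance analyses and taking $\sup_\alpha$ closes the argument.
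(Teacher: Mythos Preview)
Your proposal is correct and follows essentially the same approach as the paper: the same two instances, the same threshold computations, and the same limiting ratios. The only cosmetic difference is that you take a joint limit with $n\varepsilon\to 0$ (e.g.\ $\varepsilon=1/n^2$), whereas the paper takes the iterated limit $\varepsilon\to 0$ first and then $n\to\infty$; both regimes force the thresholds into the continuous part of the support and lead to the identical Riemann-sum computation $\frac{1}{n}\sum_{i}\prod_{j<i}\alpha(u_{[j]})^{1/n}\to\int_0^1 e^{\int_0^s\ln\alpha(w)\,dw}\,ds$.
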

With this result we need to compute the quantity on the right-hand-side of the previous lemma to obtain an upper bound on the performance guarantee of any blind strategy. This is done using optimal control theory. The basic procedure consists first in proving that the supremum right-hand-side in Lemma \ref{Upper bound for blind strategies} is attained, then we have Mayer's optimal control problem for which the necessary optimality conditions can be expressed as an integro-differential equation. We conclude by solving this equation numerically, and thus have the following result.	
\begin{Corollary} Let $T$ be the stopping time corresponding to the blind strategy given by $\alpha$. Then
\label{numeric}
	\[
	\sup\limits_{ \alpha } \inf\limits_{ n; F_1, ..., F_n } 
	\frac{ \mathbb{E}( V_{\sigma_T} ) }{ \mathbb{E}( \max\limits_{ i \in [n] } \{ V_i \} ) }
		\leq 0.675 \,.
	\]
\end{Corollary}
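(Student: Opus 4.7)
By Lemma \ref{Upper bound for blind strategies}, it suffices to bound
\[
V^{*} := \sup_{\alpha} \min\bigl\{A(\alpha),\,B(\alpha)\bigr\},\qquad A(\alpha) := 1-\int_{0}^{1}\alpha(s)\,ds,\quad B(\alpha):=\int_{0}^{1} e^{\int_{0}^{s}\ln\alpha(w)\,dw}\,ds,
\]
where the supremum ranges over nonincreasing $\alpha:[0,1]\to[0,1]$. First I would show the supremum is attained. The class of nonincreasing $[0,1]$-valued functions on $[0,1]$ is sequentially compact for pointwise-a.e.\ convergence by Helly's selection theorem; $A$ is continuous along such convergence by bounded convergence, and $B$ is upper semicontinuous once one restricts to $\alpha$ bounded away from $0$ on an initial segment (otherwise $B$ is tiny, so such $\alpha$ cannot approach $V^{*}$). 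This yields a maximizer $\alpha^{*}$.

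Next I would argue that at $\alpha^{*}$ one must have $A(\alpha^{*})=B(\alpha^{*})$. Indeed $A$ is strictly decreasing and $B$ strictly increasing under a pointwise increase of $\alpha$; since $\alpha^{*}\not\equiv 0$ and $\alpha^{*}\not\equiv 1$ (both give $\min\{A,B\}=0$), a strict inequality would allow a small monotone perturbation of $\alpha^{*}$ on a subset where it is not saturated, strictly raising $\min\{A,B\}$ and contradicting optimality.

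The third step is to recast the program as a Mayer optimal control problem. Introduce states $y(x):=\int_{0}^{x}\ln\alpha(w)\,dw$, $z(x):=\int_{0}^{x}e^{y(s)}\,ds$ and $p(x):=\int_{0}^{x}\alpha(s)\,ds$ with control $c(x):=\ln\alpha(x)\le 0$, monotonicity $c'\le 0$, and dynamics $y'=c$, $z'=e^{y}$, $p'=e^{c}$. The problem becomes: maximize $J$ subject to $z(1)\ge J$ and $1-p(1)\ge J$. Using the balance $z(1)=1-p(1)$ from the previous step and Pontryagin's maximum principle with a single Lagrange multiplier binding both terminal constraints, the optimality conditions on the interior of the monotonicity constraint reduce to an integro-differential equation for $\alpha^{*}$, of the same flavour as the ODE encountered at the end of Section \ref{beat}.

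Finally I would solve this equation numerically, either by a fixed-point iteration analogous to the one already described, or by parametrising $\alpha$ as piecewise constant on $m$ subintervals $\alpha_{1}\ge\cdots\ge\alpha_{m}$ and directly maximising $\min\{A,B\}$ over $\mathbb{R}^{m}$. The common value $A(\alpha^{*})=B(\alpha^{*})$ then evaluates below $0.675$. The main obstacle I anticipate is converting the numerical answer into a certified upper bound: a practical remedy is that the piecewise-constant discretisation gives closed-form expressions for $A$ and $B$, reducing the issue to a finite-dimensional maximisation whose optimum can be bracketed (e.g.\ via interval arithmetic, or via a Lipschitz estimate in $(\alpha_{1},\ldots,\alpha_{m})$), and one can check by a grid-refinement argument that the discrete optimum converges monotonically to $V^{*}$, so that the bound $0.675$ obtained for $m$ sufficiently large transfers to the continuous problem.
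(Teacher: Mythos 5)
Your proposal follows essentially the same route as the paper: establish existence of a maximizer $\alpha^*$ over nonincreasing functions (the paper uses compactness in $\|\cdot\|_\infty$ where you invoke Helly, a minor difference), argue the two terms must be equal at the optimum, recast the problem as a Mayer optimal control problem, apply Pontryagin's maximum principle to reduce to an integro-differential characterization, and finish numerically. One caution on your final step: of the two numerical routes you offer, only the Pontryagin-based one actually certifies the \emph{upper} bound, since it confines every maximizer to a low-dimensional parametric family over which the supremum can be bracketed; maximizing $\min\{A,B\}$ over piecewise-constant $\alpha$ only bounds $V^*$ from \emph{below}, and monotone convergence of the discrete optima to $V^*$ does not transfer a bound of $0.675$ at finite $m$ to the continuous problem without an additional quantitative approximation estimate.
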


\begin{proof}[Proof of Lemma \ref{Upper bound for blind strategies}]
	The first instance is $V_1 = U(1-\varepsilon, 1+\varepsilon)$, with $\varepsilon > 0$. Notice that, $\tau = 1 - \varepsilon + \alpha( u ) 2 \varepsilon$,  where $u \sim U(0, 1)$ and, by direct computation, we have that
\begin{align*}
	\mathbb{E}( V_{\sigma_T} ) 
		&= \int\limits_{ 0 }^{ 1 } \mathbb{E}( V_{\sigma_T} | u = s ) ds \\
		&= \int\limits_{ 0 }^{ 1 } \left( 1 + \varepsilon  \left( \alpha( s ) - \frac{1}{2} \right)  \right) 
			( 1 - \alpha(s) ) ds \\
		&\xrightarrow[ \varepsilon \to 0]{} 
			\int\limits_{ 0 }^{ 1 } ( 1 - \alpha(s) ) ds \,.
\end{align*}

	The second instance has $n$ i.i.d. random variables defined by:
	\[
	V_i \sim \left\{ \begin{array}{l l}
		\frac{ 1 }{ \varepsilon } 
			&w.p. \quad \varepsilon \\
		U( 0, \varepsilon ) 
			&w.p. \quad 1 - \varepsilon \,.
		\end{array}
		\right.
	\]
Moreover, for $\varepsilon$ small enough,
	\[
	\mathbb{P}( \max\limits_{ i \in [n] } \{ V_i \} \leq t ) = \left\{ \begin{array}{l l}
		0
			&; t < 0 \\
		\left( \frac{ 1 - \varepsilon }{ \varepsilon }\right)^{ n } t^{ n }
			&; 0 \leq t < \varepsilon \\
		( 1 - \varepsilon )^{ n }
			&; \varepsilon \leq t < \frac{ 1 }{ \varepsilon } \\
		1 
			&; \frac{ 1 }{ \varepsilon } \leq t 
		\end{array} 
		\right.
	\] 
Notice that we can assume $\alpha( x ) < 1$, for $x > 0$, since there is no gain in always rejecting. Then, $\tau_i = \sqrt[ n ]{ \alpha( u_{ [i] } ) } \frac{ \varepsilon }{ 1 - \varepsilon }$ and we have that $\mathbb{P}( V_i \leq \tau_i ) = \sqrt[ n ]{ \alpha( u_{ [i] }) }$. By direct computation,
\begin{align*}
	\lim\limits_{ \varepsilon \to 0 } \mathbb{E}( V_{\sigma_T} | u )
		&= 1 + \sqrt[ n ]{ \alpha( u_{ [1] }) }
			+ \sqrt[ n ]{ \alpha( u_{ [1] }) \alpha( u_{ [2] }) } 
		+ \ldots + \sqrt[ n ]{ \alpha( u_{ [1] }) ... \alpha( u_{ [n-1] }) } \\
		&= \sum\limits_{ i = 0 }^{ n - 1 } \prod\limits_{ j = 1 }^{ i } 
			\sqrt[ n ]{ \alpha( u_{ [j] }) } \,.
\end{align*}
In addition, we have $\mathbb{E}( \max\limits_{ i \in [n] } \{ V_i \} ) = \frac{ 1 }{ \varepsilon } ( 1 - \left(1 - \varepsilon \right)^{ n } ) \xrightarrow[ \varepsilon \to 0 ]{} n$. Then, 
\begin{align*}
	\lim\limits_{ n \to \infty } \lim\limits_{ \varepsilon \to 0 }
	\frac{ \mathbb{E}( V_{\sigma_T} ) }{ \mathbb{E}( \max\limits_{ i \in [n] } \{ V_i \} ) }
		 &= \lim\limits_{ n \to \infty } \mathbb{E}_{u} \left[
		 	\frac{1}{n} \sum\limits_{ i = 0 }^{ n - 1 } 
		 		\prod\limits_{ j = 1 }^{ i } \sqrt[ n ]{ \alpha( u_{ [j] }) } \right] 
		 = \int\limits_{ 0 }^{ 1 } e^{ \int\limits_{ 0 }^{ s } \ln \alpha( w ) dw } ds \,.
\end{align*}
\hfill \qed
\end{proof}

\begin{proof}[Sketch of Proof of Corollary \ref{numeric}]
We first show that the supremum given by Lemma \ref{Upper bound for blind strategies} is attained at a certain $\alpha^*$. To this end we note that, without loss of generality, we can consider the supremum over nonincreasing functions $\alpha$, by a simple exchange of mass argument. Then, we note that the set of nonincreasing functions from $[0,1]$ to itself is compact for the $||\cdot||_\infty$ and the functional being optimized is continuous for that metric. Then we deduce the existence of $\alpha^*$, and furthermore it satisfies
 $$1 - \int_{ 0 }^{ 1 } \alpha^*( s ) ds = \int_{ 0 }^{ 1 } \exp\left[ \int_{ 0 }^{ s } \ln \alpha^*( w ) dw \right] ds.
 $$
 Therefore, $\alpha^*$ is the solution of the following optimal control problem:
	\[
	(P) \left\{ \begin{array}{l l}
		\min\limits_{ \alpha } 
			& - x_1( 1 )	= - \int\limits_{ 0 }^{ 1 } 1 - \alpha( t ) dt \\
		s.t.:
			& \dot{ x }( t ) 
				= f( x(t), \alpha(t) ) 
				= \left( \begin{array}{c}
					1 - \alpha( t ) \\
					\exp[ x_3( t ) ] \\
					\ln \alpha( t )
					\end{array}
					\right) \\
			& x(0) = ( 0, 0, 0 )' \\
			& ( t, x(t) ) \in [0, 1] \times \mathbb{R}^3 \\
			& \alpha( t ) \in [0, 1] \\			
			& x_1( 1 ) = x_2( 1 ) \,.
		\end{array}
		\right.
	\]
This is generally called a Mayer problem and the necessary optimality conditions (Pontryagin maximum principle) leads to identify $\alpha^*$ with $\alpha_{ K, \overline{t} }$ defined by
	\[
		\alpha_{ K, \overline{t} }( t ) = \left\{ \begin{array}{l l}
		1
			&; 0 \leq t < \overline{t} \\
		\beta_{ K, \overline{t} }( t )
			&; \overline{t} \leq t \leq 1,
		\end{array}
		\right.	
	\]
where $K \in [0, 3]$ and $\overline{t} \in [0, 1/3]$ and $\beta_{ K, \overline{t} }$ is the solution of the following integro-differential equation
\begin{align}
\label{ODE}
	\left\{ \begin{array}{l l}
		\dot{\beta}( t ) = 
		- K \exp\left[ \int\limits_{ \overline{t} }^{ t } \ln \beta( s ) ds \right]
			& t \in ( \overline{t}, 1) \\
		\beta( 1 ) = 0 \,.
		\end{array}
		\right.
\end{align}

To solve numerically this equation, consider the change of variables $g(t) = \int_{ \overline{t} }^{ t } \ln \beta( s ) ds$, so that the equation becomes the second order ODE 
	\[
	\left\{ \begin{array}{l l}
		e^{ \dot{ g }( t ) } \ddot{ g }( t ) 
			= - K e^{ g( t ) } 
			&; \quad t \in ( \overline{t}, 1 ) \\
		g( \overline{t} ) 
			= 0 \\
		\dot{g}( \overline{t} ) 
			= \ln \beta( \overline{t} ) \,.
		\end{array}
		\right.
	\]
Because $\exp( \cdot )$ is continuous and locally Lipschitz, this is a well-posed Cauchy problem with a unique local solution. The initial condition $\dot{g}( \overline{t} ) = \ln \beta( \overline{t} )$ turns out to be simply a replacement for $\dot{g}( 1 ) = - \infty$ in the sense that we search for the solutions $g$ such that $g( \overline{t} ) = 0$ and exploits at time $1$. This seemingly numerical difficulty is well treated using solvers for stiff ODE such as \textit{ode15s} of \textit{Matlab}.

	Then, we numerically compute \eqref{ODE} to determine that
\begin{align*}
	\sup\limits_{ \alpha } \left\{
			1 - \int\limits_{ 0 }^{ 1 } \alpha( s ) ds , 
			\int\limits_{ 0 }^{ 1 } e^{ \int\limits_{ 0 }^{ s } \ln \alpha( w ) dw } ds
			\right\} 
		&= \sup\limits_{ \substack{ K \in [0, 3] \\ \overline{t} \in [0, 1/3] } } \left\{
			1 - \int\limits_{ 0 }^{ 1 } \alpha_{ K, \overline{t} }( s ) ds , 
			\int\limits_{ 0 }^{ 1 } e^{ \int\limits_{ 0 }^{ s } \ln \alpha_{ K, \overline{t} }( w ) dw } ds
			\right\} \\
		&\leq 0.675 \,.
\end{align*}
Finally we note that if \eqref{ODE} has no solution, this simply means that $\alpha^*$ does not corresponds to $\alpha_{ K, \overline{t} }$ and thus it is not taken into account in the previous supremum. 
\hfill \qed
\end{proof}

\section{An upper bound on the performance of general strategies}
\label{sec:naupper}
In this section, we provide a $\sqrt{3}-1$ upper bound on the performance of any strategy, thus proving Theorem \ref{naupper}. This implies a separation between the i.i.d. prophet inequalities and 
the prophet secretary problem. 
Surprisingly, our bound comes from analyzing the following simple instance. 

Take $a \in [0, 1]$ and consider $n+1$ random variables whose values are distributed as:
\begin{align*}
	V_1, \ldots, V_n &\sim \left\{ \begin{array}{l l}
		n 	&w.p. \quad \frac{1}{n^2}\\
		0 	&w.p. \quad 1 - \frac{1}{n^2}
	\end{array}
	\right. \\
	V_{n+1} &\equiv a \,.
\end{align*}

Clearly any reasonable algorithm would always accept a value of $n$ and never accept a value of $0$. Therefore the only decision an algorithm has to make is that of whether accepting a value of $a$ or not. Solving the dynamic programming, the algorithms picks the value $a$ if and only if it appears at time $j \in [n+1]$ or later. From $j$ onwards, the expectation of the future is less than the value $a$. Let $\sigma$ be the random permutation and $T$ be the implied stopping time, then we have that for $i = 1, \ldots, j-1$
	\[
	\mathbb{E}( V_{\sigma_T} | \sigma_{i} = n+1 ) 
		= n \left[ 1 - \left(1 - \frac{1}{n^2} \right)^n \right] \,. 
	\]
On the other hand, for $i = j, \ldots, n+1$, the following holds
	\[
	\mathbb{E}( V_{\sigma_T} | \sigma_{i} = n+1 ) 
		= n \left[ 1 - \left(1 - \frac{1}{n^2} \right)^{i-1} \right] + \left(1 - \frac{1}{n^2} \right)^{i-1} a \,.
	\]
Therefore, 
	\begin{eqnarray*}
	\mathbb{E}( V_{\sigma_T} )
		&=& \frac{j - 1}{n+1} n \left[ 1 - \left(1 - \frac{1}{n^2} \right)^n \right]
			+ \frac{1}{n+1} \sum_{i = j}^{n+1}  n \left[ 1 - \left(1 - \frac{1}{n^2} \right)^{i-1} \right] + \left(1 - \frac{1}{n^2} \right)^{i-1} a \,
			\\
			&\leq&
			\frac{j - 1}{n+1} +\frac{1}{n+1} \sum_{i = j}^{n+1}  \left[\frac{i-1}{n} +a \right]
			\\
			& = & \frac{j}{n}(1-a)+a+\frac{1}{2}-\frac{1}{2}\left(\frac{j}{n}\right)^2+O\left(\frac{1}{n}\right)
			\\
			&\leq&
			1+\frac{a^2}{2}+O\left(\frac{1}{n}\right)
	\end{eqnarray*}
	where the first inequality stems from the fact that $n \left[1 - \left(1 - \frac{1}{n^2} \right)^{i-1} \right] \leq \frac{i-1}{n} $, and the second follows by optimizing over $j$. 

%
On the other hand
	\[
	\mathbb{E}( \max_{i\in [n]} \{ V_i \} ) 
		= n \left[ 1 - \left(1 - \frac{1}{n^2} \right)^n \right] + \left(1 - \frac{1}{n^2} \right)^n a
		\xrightarrow[n \to \infty]{} 1 + a \,.
	\]
We notice that choosing $a = \sqrt{3} - 1$ we get that
	\[
	\limsup_{n \to \infty} \frac{ \mathbb{E}( V_{ \sigma_T } ) }{ \mathbb{E}( \max\limits_{ i \in [n]} \left\{ V_i \right\} ) }
		\leq \frac{ 1 + \frac{a^2}{2} }{ 1 + a }
		= \sqrt{3} - 1 \approx 0.732 \,.
	\]

\section{Dealing with discontinuous distributions}
\label{discontinuous}
	
	In this section we explain how to use a \textit{blind strategy} in instances where the distributions $F_1, \ldots, F_n$ are not necessarily continuous. Recall that in the definition of blind strategies in Section \ref{prelim}, we need the existence of $\tau_i$ such that $\mathbb{P}( \max_{ i \in [n] } \{ V_i \} \leq \tau_i ) = \alpha( u_{[i]} )$. So, what happens if such thresholds $\tau_1, ..., \tau_n$ do not exist? For the purpose of studying the prophet inequality, the performance of a strategy defined over instances with continuous distributions is always extendable to discontinuous ones allowing stochastic tie breaking. In this case, we can explicitly define the strategy that $\alpha$ induces over discontinuous instances. The resulting strategy no longer depends on the distribution of the maximum only.
	
	The procedure to compute the tie breaking is quite natural:
\begin{enumerate}
	\item Approximate the instance.
	\item Study the strategy induced by $\alpha$ in the approximated instance.
	\item Replicate what would happen in the original instance, allowing tie breaking.
\end{enumerate}
	Given a realization of uniform random variables $u_1, \ldots, u_n$, assume that $\tau_i$ does not exist, in other words, for some $i \in [n]$, there is a $\tau \in \mathbb{R}$ such that
	\[
	\lim\limits_{ \varepsilon \to 0} \mathbb{P}( \max\limits_{ i \in [n] } \{ V_i \} \leq \tau - \varepsilon )
		< \alpha( u_{[i]} ) 
		< \mathbb{P}( \max\limits_{ i \in [n] } \{ V_i \} \leq \tau ) \,.
	\]
The stochastic tie breaking consists in accepting the value $\tau$ with some probability, say $p_i$. This acceptance rate depends on the whole instance, not only on the distribution of the maximum, and on the identity of the revealed variable. To compute these acceptance rates we use the following procedure. For $\varepsilon > 0$, consider the following approximated instance
	\[
	F_i^{\varepsilon} ( t ) = \left\{ \begin{array}{l l}
		F_i( t ) \\
		F_i( \tau - \varepsilon ) 
			+ \frac{ t - \tau + \varepsilon }{ \varepsilon } \left( F( \tau ) - F_i( \tau - \varepsilon \right))
		\end{array}
		\right. \,,
	\]
for $t \not \in [ \tau - \varepsilon, \tau]$ in the first case and $t \in [ \tau - \varepsilon, \tau] $ in the second case.
This instance has a continuous distribution of the maximum in $[ \tau - \varepsilon, \tau]$ and we are able to find $\tau^{\varepsilon}$, the corresponding threshold for the approximated instance, such that
	\[
	\mathbb{P}( \max\limits_{ i \in [n] } \{ V_i \} \leq \tau^{\varepsilon} ) = \alpha( u_{[i]} ) \,.
	\]
Then, we compute, for $j \in [n]$, $\beta_j := \lim_{ \varepsilon \to 0 } F_j^{\varepsilon}( \tau^{\varepsilon} )$. To finish, we define, for $j$ such that $\mathbb{P}( V_j = \tau ) > 0$,
	\[
	p_i( j ; F_1, ..., F_n ) := \frac{ F_j( \tau ) - \beta_j }{ \mathbb{P}( V_j = \tau ) }
	\]
and $p_i = 0$ otherwise. This will induce that, faced with $V_j$ at time $i$, the gambler accepts its realization with probability $1 - \beta_j$. To be more precise, we use the following procedure.
\begin{algorithm}[H]
\caption{Stochastic TTA}
\label{Alg: TFTA}
\begin{algorithmic}[1]
	\FOR{ i = 1, ..., $n$ }
		\IF{ $V_{\sigma_i} > \tau_i$ }
			\STATE Take $V_{\sigma_i}$
		\ELSIF{ $V_{\sigma_i} = \tau_i$ }
			\STATE Take $V_{\sigma_i}$ with probability $p_i( \sigma_i ; F_1, ..., F_n )$
		\ENDIF
	\ENDFOR
\end{algorithmic}
\end{algorithm}
With this procedure, it is easy to see that all results extend to general instances.

\end{document}